\documentclass[12pt,notitlepage]{amsart}%
\usepackage{amssymb}
\usepackage{amsfonts}
\usepackage{graphicx}
\usepackage{amscd}
\usepackage{graphicx}
\usepackage{amsmath}
\usepackage{amsxtra}
\usepackage{cite}
\usepackage{footmisc}%
\newtheorem{theorem}{Theorem}
\theoremstyle{plain}

\newtheorem{definition}{Definition}

\newtheorem{lemma}{Lemma}
\newtheorem{notation}{Notation}

\newtheorem{proposition}{Proposition}
\newtheorem{remark}{Remark}

\numberwithin{equation}{section}
\numberwithin{theorem}{section}
\numberwithin{lemma}{section}
\numberwithin{proposition}{section}
\numberwithin{corollary}{section}

\ifx\pdfoutput\relax\let\pdfoutput=\undefined\fi
\newcount\msipdfoutput
\ifx\pdfoutput\undefined\else
\ifcase\pdfoutput\else
\ifx\paperwidth\undefined\else
\ifdim\paperheight=0pt\relax\else\pdfpageheight\paperheight\fi
\ifdim\paperwidth=0pt\relax\else\pdfpagewidth\paperwidth\fi
\fi\fi\fi
\begin{document}
\title[$p$-Adic Dirac Equations, CTQWs, and Quantum Networks]{$p$-Adic Dirac Equations, Continuous-Time Quantum Walks, and Quantum Networks}
\author[Z\'{u}\~{n}iga-Galindo]{W. A. Z\'{u}\~{n}iga-Galindo}
\address{University of Texas Rio Grande Valley\\
School of Mathematical \& Statistical Sciences\\
One West University Blvd\\
Brownsville, TX 78520, United States}
\email{wilson.zunigagalindo@utrgv.edu}
\thanks{The author was partially supported by the Debnath Endowed \ Professorship}
\subjclass{Primary: 81Q35, 81Q65. Secondary: 26E30}

\begin{abstract}
We introduce a new class of $p$-adic Dirac equations, formulated in the
standard axiomatic framework of quantum mechanics, in which the ordinary
spatial derivatives are replaced by non-local operators built from arbitrary
integrable kernels. We diagonalize the resulting free Dirac Hamiltonian in
momentum space, construct its plane-wave solutions, determine its spectrum,
and establish a $p$-adic charge-conjugation symmetry relating particle and
antiparticle sectors. We then discretize the free equation in two ways, both
giving genuine continuous-time quantum walks rather than the discrete-time,
coined walks that dominate the existing literature: a first construction on a
countable covering of the underlying $p$-adic space, and a second, more
explicit construction on a finite, tree-structured graph, for which we prove
that the transition probabilities, once the internal (particle/antiparticle)
components of the wavefunction are combined, form a genuine, properly
normalized set of transition probabilities at every instant of time; in other
words, ignoring the internal structure of the walk, it behaves exactly like an
ordinary random walk on that graph. Building on this stochastic-matrix
property, we discuss how the resulting construction can serve as the
foundation of a quantum network with genuinely relativistic-type internal
degrees of freedom, complementing earlier, non-relativistic $p$-adic quantum
neural networks. To the best of our knowledge, this is the first
continuous-time quantum walk whose free dynamics coincides exactly with a
Dirac equation on a hierarchical graph. We close with a discussion of the open
mathematical and computational problems raised by this construction.

\end{abstract}
\keywords{$p$-adic numbers, quantum mechanics, Dirac equation, continuous-time quantum
walks, quantum networks.}
\maketitle

\section{Introduction}

\label{Section_Introduction}

Non-Archimedean, and specifically $p$-adic, models of physical space have been
studied since the late 1980s as a way of exploring what quantum mechanics
might look like if the Planck length is treated not as a mere
order-of-magnitude estimate but as an actual discreteness scale built into the
geometry of space itself,
\cite{Beltrameti-et-al,V-V-QM1,V-V-QM2,V-V-QM3,Volovich,V-V-Z}. Replacing
$\mathbb{R}$ with $\mathbb{Q}_{p}$, the field of $p$-adic numbers, has two
immediate consequences: the resulting space is a totally disconnected,
ultrametric, Cantor-like set rather than a continuum, and the Planck length
can be identified with $p^{-1}$ in a mathematically precise sense. Working
within the Dirac--von Neumann axiomatic formulation of quantum mechanics,
\cite{Dirac,von-Neumann}, on the Hilbert space $L^{2}(\mathbb{Q}_{p}^{N})$,
one obtains a rigorous, non-relativistic $p$-adic quantum mechanics,
\cite{Zuniga-double-slit,Zuniga-Dirac-Causality,Zuniga-2adic,Zuniga-Mayes,Zuniga-Chacon,Zuniga-JR,
Zuniga-AP,Zuniga-Wigner}.

Recently, the author and his collaborators have shown that a large class of
$p$-adic Schr\"{o}\-dinger equations are scaling limits of continuous-time
quantum walks (CTQWs), including CTQWs on graphs,
\cite{Zuniga-2adic,Zuniga-Mayes,Zuniga-Chacon}. This CTQW--Schr\"{o}dinger
correspondence was, in turn, used to build $p$-adic quantum neural networks
(QNNs), whose neurons are organized in the hierarchical, tree-like structure
inherited from $\mathbb{Z}_{p}$ (the unit ball in $\mathbb{Q}_{p}$) rather
than on a Euclidean lattice, \cite{Zuniga-QNN}. Furthermore, the author
recently proposed a new collapse mechanism for wavefunctions (using $\left(
\mathbb{R}\times\mathbb{Q}_{p}\right)  ^{3}$ as model of the physical space,
the model of the space-time is $\mathbb{R}\times\left(  \mathbb{R}%
\times\mathbb{Q}_{p}\right)  ^{3}$)\ that provides a new explanation of the
Wigner's friend paradox, \cite{Zuniga-AP,Zuniga-Wigner}.

The construction of a $p$-adic version of the Dirac equation is very delicate,
due to the relativistic nature of this equation and the fact that using a
space-time (or configuration space) of type $\mathbb{R}\times$ $\mathbb{Q}%
_{p}^{N}$ is not compatible with special relativity. In
\cite{Zuniga-Dirac-Causality}, the author introduced a $p$-adic Dirac
equation, built from the (essentially unique, up to normalization) Vladimirov
pseudo-differential operator that plays the role of the derivative on
$\mathbb{Q}_{p}$, and showed that the resulting equation predicts particles,
antiparticles, and charge conjugation, exactly as the standard Dirac equation
does, but that an isolated $p$-adic Dirac system does \emph{not} satisfy
Einstein causality: the discreteness of $p$-adic space allows correlations,
and hence signals in a suitable sense, to propagate faster than the (finite,
but $p$-dependent) analogue of the speed of light. The reader may consult
\cite{Zuniga-Dirac-Causality,Zuniga-2adic} for a further discussion on quantum
non-locality and faster-than-light communication.

Despite this lack of genuine relativistic invariance, the equation shares
several structurally relevant properties with its standard counterpart: it is
a first-order matrix equation whose free Hamiltonian is built from a
Clifford-algebra-type structure, it admits plane-wave solutions with a
Dirac-type dispersion relation, and its spectrum splits into particle and
antiparticle branches related by charge conjugation. It is in this structural,
and not literal, sense that we refer to its internal degrees of freedom as
\emph{relativistic-type} throughout the paper. A topologically protected
extension of this equation, modeled on the Jackiw--Rebbi soliton, was
introduced more recently in \cite{Zuniga-JR}.

Until now, these two threads---the $p$-adic Dirac equation on one side, the
$p$-adic CTQW\slash QNN formalism on the other---have remained disjoint: no
discretization of a $p$-adic Dirac equation into a continuous-time quantum
walk had been constructed, and consequently no $p$-adic quantum network with
genuinely spinor-like (particle/antiparticle) internal degrees of freedom was
available. This gap mirrors, and is in fact sharper than, a gap that persists
in the ordinary (Archimedean) literature relating quantum walks to the Dirac
equation, which we review in detail in Section \ref{Section_Related_Work}: the
overwhelming majority of that literature relates the Dirac equation to
\emph{discrete-time}, coined quantum walks, going back to Feynman's
relativistic checkerboard and made rigorous by Bialynicki-Birula, Meyer, and
Succi and Benzi, with the Dirac equation appearing only as a continuum
\emph{limit} of the walk,
\cite{Bialynicki-Birula,Meyer,Succi-Benzi,Arrighi-Nesme-Forets}; genuinely
continuous-time constructions carrying a spinor-like structure are confined to
a handful of specific graph families, such as bipartite graphs or the
honeycomb lattice, \cite{Todtli,Jay-Debbasch-Wang}, and none of them is
formulated over a hierarchical, tree-structured graph or is framed as the
substrate of a quantum network.

The present paper closes both gaps at once, in the non-Archimedean setting. We
introduce a different, complementary class of $p$-adic Dirac equations, in
which the derivative along each spatial direction $x_{l}$ is replaced by a
non-local difference operator $\boldsymbol{J}_{x_{l}}$ built from an
\emph{arbitrary} integrable radial kernel $J_{l}\left(  \left\vert
y\right\vert _{p}\right)  $, rather than from the Vladimirov
pseudo-differential operator used in \cite{Zuniga-Dirac-Causality}. Unlike the
Vladimirov operator, which is unbounded on $L^{2}\left(  \mathbb{Q}%
_{p}\right)  $, the operator $\boldsymbol{J}_{x_{l}}$ is bounded for every
integrable kernel $J_{l}$; consequently, the equation of
\cite{Zuniga-Dirac-Causality} is \emph{not} a particular case of the family
introduced here, nor is the present family a generalization of it in the
strict sense --- the two constructions are built from operators of genuinely
different analytic type, neither contained in the other. The boundedness of
$\boldsymbol{J}_{x_{l}}$ is, however, precisely what makes this new family
well suited to discretization: it yields a large, explicit class of $p$-adic
Dirac equations, parametrized by the choice of $J_{1}$, $J_{2}$, $J_{3}$, each
member of which can be discretized directly into a continuous-time quantum
walk, in a way the unbounded Vladimirov operator does not immediately allow.

We carry out, for this new class, the same programme of analysis that is
classical for the standard Dirac equation, \cite{Thaller,Bjorken,Greiner}: we
diagonalize the free Hamiltonian $\boldsymbol{H}_{0}$ in the Fourier
representation, construct its plane-wave solutions, determine its spectrum and
show that it is contained in the spectrum of the standard free Dirac operator,
exhibit the Foldy--Wouthuysen-type transformation that decouples
$\boldsymbol{H}_{0}$ into a pair of square-root Klein--Gordon operators, and
establish a $p$-adic charge-conjugation symmetry. We then use the explicit
diagonalization to construct two discretizations of the free equation into
genuine CTQWs: a first construction on a countable covering of $\mathbb{Q}%
_{p}^{3}$ (Section \ref{Section_CTQW_I}), and a second, more explicit
construction on the finite graph $G_{l}^{3}=\mathbb{Z}_{p}^{3}/p^{l}%
\mathbb{Z}_{p}^{3}$ (Section \ref{Section_CTQW_II}), for which we prove that
the Born-rule transition weights, summed over the four spinor components,
define a genuine stochastic matrix for every $t\geq0$: tracing out the
internal (particle/antiparticle) degrees of freedom of the $p$-adic Dirac CTQW
produces an honest random walk on $G_{l}^{3}$.

Beyond their intrinsic mathematical interest, these results have a direct
implication for quantum networks, which we develop in Section
\ref{Section_Quantum_Computers}. The stochastic-matrix property established in
Section \ref{Section_CTQW_II} suggests that the construction can serve as the
substrate of a quantum network in the sense of \cite{Zuniga-QNN}, extended
here from scalar, non-relativistic internal degrees of freedom to a genuinely
relativistic-type, four-spinor structure carrying a built-in
charge-conjugation symmetry. We close the paper, in Section
\ref{Section_Discussion}, with a discussion of the mathematical open problems
that this construction raises, ranging from a quantitative continuum-limit
estimate to the nonlinear, open-system extension of the network.

In a forthcoming paper, we plan to extend the numerical methods introduced in
\cite{Zuniga-Chacon} and \cite{Zuniga-QNN} to the case of the $p$-adic Dirac
equations and the associated QNNs.

The paper is organized as follows. Section \ref{Section_Related_Work} reviews
the state of the art on the relation between quantum walks and the Dirac
equation, both in the Archimedean and in the non-Archimedean setting. Section
\ref{Section_Some_Preliminaries} collects the notation and basic facts on
$p$-adic analysis used throughout the paper (a more detailed review is given
in the Appendix). Section \ref{Section_New_Class} introduces the new class of
$p$-adic Dirac equations. Section \ref{Section_Plane_Waves} constructs their
plane-wave solutions. Section \ref{Section_3} diagonalizes the free
Hamiltonian, determines its spectrum, and establishes the Foldy--Wouthuysen
decoupling. Section \ref{Section_4} establishes charge conjugation. Sections
\ref{Section_CTQW_I} and \ref{Section_CTQW_II} construct the two CTQW
discretizations of the free equation, the second of which yields the
stochastic-matrix theorem described above. Section
\ref{Section_Quantum_Computers} discusses the resulting quantum-network
construction, and Section \ref{Section_Discussion} concludes with a discussion
of open problems. An Appendix collects the background material on $p$-adic
analysis needed in the body of the paper.

\section{Related work}

\label{Section_Related_Work}

The idea that the Dirac equation is, at bottom, a quantum walk on a lattice
goes back to Feynman's relativistic checkerboard model for the $1+1$
dimensional propagator, \cite{Feynman-Hibbs}. This intuition was made
mathematically precise, independently, by Bialynicki-Birula,
\cite{Bialynicki-Birula}, Meyer, \cite{Meyer}, and Succi and Benzi,
\cite{Succi-Benzi}, who showed that a homogeneous (translation-invariant,
time-independent), strictly causal discrete-time quantum walk (DTQW) on
$\mathbb{Z}$, with a two-dimensional coin space attached to each site,
converges, as the lattice spacing $\varepsilon\rightarrow0$, to the exact
solution of the free Dirac equation, with an $O(\varepsilon^{2})$
discretization error. Arrighi, Nesme, and Forets extended this construction to
$2+1$ and $3+1$ dimensions and proved consistency, stability, and convergence
of the corresponding DTQWs to the higher-dimensional Dirac equation,
\cite{Arrighi-Nesme-Forets}. This body of work established what is now the
standard paradigm: \emph{the Dirac equation is the continuum limit of a
discrete-time, coined quantum walk}, with the coin space playing the role of
the spinor degrees of freedom.

A second, closely related line of research uses this correspondence to locate
genuinely relativistic phenomena---Zitterbewegung and the Klein
paradox---inside the purely quantum-informational dynamics of the DTQW.
Strauch obtained closed-form Bessel-function solutions of the one-dimensional
DTQW and showed that, in one limit, they reduce to the wave-packet solutions
of the free Dirac equation, exhibiting relativistic spreading, while in the
opposite limit they reduce to the solutions of the ordinary continuous-time
quantum walk (CTQW), \cite{Strauch1, Strauch2}. Kurzy\'{n}ski compared the
one-dimensional DTQW directly with the discretized Dirac equation and showed
that both Zitterbewegung and Klein's paradox are already present in the walk,
\cite{Kurzynski}. Chandrashekar, Banerjee, and Srikanth pushed this comparison
further, relating the coupled and decoupled forms of the DTQW to the free
Dirac and Klein--Gordon equations, respectively, and identifying the walk's
coin as the analogue of the spinor degree of freedom, \cite{Chandrashekar}.
These results were subsequently realized experimentally on a trapped-ion
platform: Gerritsma, Kirchmair, Z\"{a}hringer, Solano, Blatt, and Roos
implemented the $1+1$ dimensional Dirac equation by coupling a single qubit to
the vibrational mode of a trapped ion and observed Zitterbewegung directly,
\cite{Gerritsma1}, and the same group later used the same platform to simulate
the Klein paradox, \cite{Gerritsma2}.

A parallel and, on the surface, quite different line of research studies
\emph{continuous}-time quantum walks. The CTQW was introduced by Farhi and
Gutmann as the direct quantization, by analytic continuation, of a classical
continuous-time random walk on a graph, \cite{Farhi-Gutman}: its dynamics
is governed by an ordinary Schr\"{o}dinger equation $i\partial_{t}\psi=H\psi$,
with $H$ the graph adjacency matrix or the graph Laplacian. The framework was
subsequently developed and surveyed extensively, e.g., in
\cite{Mulkne-Blumen,Venegas-Andraca,Childs-et-al}. In
\cite{Zuniga-2adic}, the author established that the Farhi--Gutmann CTQWs are
a particular case of a much larger family of CTQWs constructed using
techniques of $p$-adic analysis, as we discuss below. For this reason the CTQW
is naturally the continuous-time counterpart of the \emph{Schr\"{o}dinger},
rather than the Dirac, equation, and the literature connecting CTQWs directly
to Dirac dynamics is considerably thinner than that for its discrete-time counterpart.

A few constructions do inject a genuine two-component structure into the CTQW
framework: T\"{o}dtli, Laner, Semenov, Paoli, Blattner, and Kunegis studied
CTQWs on directed bipartite graphs and showed that probability transport
between the two node partitions---which can be read as the two components of a
spinor---can be tuned continuously, and even completely suppressed, by a
single phase parameter, producing a particle/antiparticle-mixing phenomenon
reminiscent of Zitterbewegung, \cite{Todtli}. Continuous-time hopping
Hamiltonians on the honeycomb lattice, introduced to model electron transport
in graphene, provide another genuinely continuous-time realization of Dirac
dynamics, since the low-energy dispersion around the honeycomb lattice's Dirac
points is exactly the $2+1$ dimensional massless Dirac equation,
\cite{Jay-Debbasch-Wang}. Di Molfetta and Debbasch classified which continuum
limits are attainable at all from families of coined walks and showed that,
when a continuum limit exists, it is described either by a Dirac-like equation
or by a pair of decoupled Klein--Gordon equations, \cite{DiMolfetta-Debbasch};
the same authors, together with Brachet, extended this program to curved
space-time, showing that a suitably chosen family of DTQWs converges to
massless Dirac fermions propagating on a given background metric,
\cite{DiMolfetta-Brachet-Debbasch}. With the exception of the
honeycomb-lattice and bipartite-graph constructions just mentioned, this
entire program remains formulated on the discrete-time side, with the Dirac
equation appearing only as the target of a continuum limit, not as the
generator of a walk that is continuous in time from the outset.

None of the constructions reviewed above is formulated over a non-Archimedean
field. On the $p$-adic side, the author showed that free $p$-adic
Schr\"{o}dinger equations arise as the scaling limit of continuous-time
quantum Markov chains on the $p$-adic tree $\mathbb{Z}_{p}$, and that
discretizing these equations produces genuine CTQWs on finite graphs,
\cite{Zuniga-2adic}; this correspondence was developed further, together with
Chac\'{o}n-Cort\'{e}s, through detailed comparisons and numerical simulations
of the resulting discretizations, \cite{Zuniga-Chacon}, and, together with
Mayes, in the setting of $p$-adic infinite potential wells,
\cite{Zuniga-Mayes}. On the equation side, the $p$-adic Dirac equation itself
was introduced in \cite{Zuniga-Dirac-Causality}, where it was shown to predict
particles, antiparticles, and charge conjugation, while violating Einstein
causality as a consequence of the discreteness of $p$-adic space; a
topologically protected version of this equation, modeled on the Jackiw--Rebbi
soliton, was introduced more recently in \cite{Zuniga-JR}. These two
threads---$p$-adic CTQWs on one side, the $p$-adic Dirac equation on the
other---have, until now, remained disjoint: no discretization of the $p$-adic
Dirac equation into a continuous-time quantum walk has been constructed. The
present paper closes this gap. In doing so, it also addresses, in the
non-Archimedean setting, a gap that persists in the Archimedean literature
reviewed above: a genuinely continuous-time, graph-based quantum walk whose
continuum limit is the (here, $p$-adic) Dirac equation, rather than the
Schr\"{o}dinger equation, built from the outset without passing through a
discrete-time, coined intermediate model. We note that a companion
construction, restricted to the $p$-adic Schr\"{o}dinger case and aimed at
quantum neural network architectures rather than at the Dirac equation, was
developed in \cite{Zuniga-QNN}.

\section{\label{Section_Some_Preliminaries}Some mathematical preliminaries}

From now on, we use $p$ to denote a fixed prime number. Any non-zero $p-$adic
number $x$ has a unique expansion of the form%
\begin{equation}
x=x_{-k}p^{-k}+x_{-k+1}p^{-k+1}+\ldots+x_{0}+x_{1}p+\ldots,\text{ }
\label{p-adic-number}%
\end{equation}
with $x_{-k}\neq0$, where $k$ is an integer, and the $x_{j}$s\ are numbers
from the set $\left\{  0,1,\ldots,p-1\right\}  $. The set of all possible
sequences of the form (\ref{p-adic-number}) constitutes the field of $p$-adic
numbers $\mathbb{Q}_{p}$. There are natural field operations, sum and
multiplication, on series of form (\ref{p-adic-number}). There is also a norm
in $\mathbb{Q}_{p}$ defined as $\left\vert x\right\vert _{p}=p^{-ord(x)}$,
where $ord(x)=-k$, for a nonzero $p$-adic number $x$. By definition
$ord(0)=\infty$. The field of $p$-adic numbers with the distance induced by
$\left\vert \cdot\right\vert _{p}$ is a complete ultrametric space. The
ultrametric property refers to the fact that $\left\vert x-y\right\vert
_{p}\leq\max\left\{  \left\vert x-z\right\vert _{p},\left\vert z-y\right\vert
_{p}\right\}  $ for any $x$, $y$, $z$ in $\mathbb{Q}_{p}$. The $p$-adic
integers are sequences of form (\ref{p-adic-number}) with $-k\geq0$. All these
sequences constitute the unit ball $\mathbb{Z}_{p}$. The unit ball is an
infinite rooted tree with fractal structure. As a topological space
$\mathbb{Q}_{p}$\ is homeomorphic to a Cantor-like subset of the real line,
see, e.g., \cite{V-V-Z}, \cite{Alberioetal}.

We extend the $p-$adic norm to $\mathbb{Q}_{p}^{N}$ by taking%
\[
||x||_{p}:=\max_{1\leq i\leq N}|x_{i}|_{p},\qquad\text{for }x=(x_{1}%
,\dots,x_{N})\in\mathbb{Q}_{p}^{N}.
\]
We define $ord(x)=\min_{1\leq i\leq N}\{ord(x_{i})\}$, then $||x||_{p}%
=p^{-ord(x)}$.\ The metric space $\left(  \mathbb{Q}_{p}^{N},||\cdot
||_{p}\right)  $ is a complete ultrametric space.

A function $\varphi:\mathbb{Q}_{p}^{N}\rightarrow\mathbb{C}$ is called locally
constant, if for any $a\in\mathbb{Q}_{p}^{N}$, there is an integer $l=l(a)$,
such that
\[
\varphi\left(  a+x\right)  =\varphi\left(  a\right)  \text{ for any }%
||x||_{p}\leq p^{l}.
\]
The set of functions for which $l=l\left(  \varphi\right)  $ depends only on
$\varphi$ form a $\mathbb{C}$-vector space denoted as $\mathcal{U}%
_{loc}\left(  \mathbb{Q}_{p}^{N}\right)  $. We call $l\left(  \varphi\right)
$ the exponent of local constancy. If $\varphi\in\mathcal{U}_{loc}\left(
\mathbb{Q}_{p}^{N}\right)  $ has compact support, we say that $\varphi$ is a
test function. We denote by $\mathcal{D}(\mathbb{Q}_{p}^{N})$ the complex
vector space of test functions. There is a natural integration theory so that
$\int_{\mathbb{Q}_{p}^{N}}\varphi\left(  x\right)  d^{N}x$ gives a
well-defined complex number. The measure $d^{N}x$ is the Haar measure of
$\mathbb{Q}_{p}^{N}$. In the Appendix, we give a quick review of the basic
aspects of the $p$-adic analysis required here.

By $p$-adic quantum mechanics (QM), we mean QM in the sense of the Dirac-von
Neumann formulation\ on the Hilbert space%
\[
L^{2}(\mathbb{Q}_{p}^{N}):=L^{2}(\mathbb{Q}_{p}^{N},d^{N}x)=\left\{
f:\mathbb{Q}_{p}^{N}\rightarrow\mathbb{C};\left\Vert f\right\Vert _{2}=\left(
\text{ }%
%TCIMACRO{\dint \limits_{\mathbb{Q}_{p}^{N}}}%
%BeginExpansion
{\displaystyle\int\limits_{\mathbb{Q}_{p}^{N}}}
%EndExpansion
\left\vert f\left(  x\right)  \right\vert ^{2}d^{N}x\right)  ^{\frac{1}{2}%
}<\infty\right\}  .
\]
Given $f,g\in L^{2}(\mathbb{Q}_{p}^{N})$, we set
\[
\left\langle f,g\right\rangle =%
%TCIMACRO{\dint \limits_{\mathbb{Q}_{p}^{N}}}%
%BeginExpansion
{\displaystyle\int\limits_{\mathbb{Q}_{p}^{N}}}
%EndExpansion
f\left(  x\right)  \overline{g\left(  x\right)  }d^{N}x,
\]
where the bar denotes the complex conjugate.

\section{\label{Section_New_Class}A new class of $p$-adic Dirac equations}

\subsection{Abstract Dirac equations}

We denote the Pauli matrices as%
\[
\sigma_{1}=\left[
\begin{array}
[c]{cc}%
0 & 1\\
1 & 0
\end{array}
\right]  ,\text{ \ }\sigma_{2}=\left[
\begin{array}
[c]{cc}%
0 & -\mathrm{i}\\
\mathrm{i} & 0
\end{array}
\right]  ,\text{ \ }\sigma_{3}=\left[
\begin{array}
[c]{cc}%
1 & 0\\
0 & -1
\end{array}
\right]  ,\text{ \ }%
\]
where \textrm{i}$=\sqrt{-1}\in\mathbb{C}$ , and the $4\times4$ Dirac matrices
as
\[
\beta=\left[
\begin{array}
[c]{cc}%
\boldsymbol{1} & \boldsymbol{0}\\
\boldsymbol{0} & -\boldsymbol{1}%
\end{array}
\right]  ,\text{ \ \ }\alpha_{l}=\left[
\begin{array}
[c]{cc}%
\boldsymbol{0} & \sigma_{l}\\
\sigma_{l} & \boldsymbol{0}%
\end{array}
\right]  \text{, for }l=1,2,3\text{,}%
\]
where $\boldsymbol{1}$ denotes the $2\times2$ identity matrix, and
$\boldsymbol{0}$ denotes the $2\times2$ zero matrix.

We set%
\[
\nabla:=\left[
\begin{array}
[c]{c}%
\boldsymbol{J}_{x_{1}}\\
\boldsymbol{J}_{x_{2}}\\
\boldsymbol{J}_{x_{3}}%
\end{array}
\right]  ,
\]
where $\boldsymbol{J}_{x_{i}}$ denotes an abstract version of the standard
derivative $\frac{\partial}{\partial x_{i}}$. The spatial variables $x_{1}$,
$x_{2}$, $x_{3}$ run in an abstract space $\mathbb{X}$, which can be
specialized to $\mathbb{R}$. We use the standard notation
\[
\boldsymbol{\alpha}\cdot\nabla:=%
%TCIMACRO{\dsum \limits_{l=1}^{3}}%
%BeginExpansion
{\displaystyle\sum\limits_{l=1}^{3}}
%EndExpansion
\alpha_{l}\boldsymbol{J}_{x_{l}}\text{, and }\boldsymbol{\sigma}\cdot\nabla:=%
%TCIMACRO{\dsum \limits_{l=1}^{3}}%
%BeginExpansion
{\displaystyle\sum\limits_{l=1}^{3}}
%EndExpansion
\sigma_{l}\boldsymbol{J}_{x_{l}}\text{.}%
\]
An abstract Dirac equation is an equation of the form%
\begin{equation}
\mathrm{i}\frac{\partial}{\partial t}\Psi\left(  t,x\right)  =\left(
\boldsymbol{\alpha}\cdot\nabla+\beta m\right)  \Psi\left(  t,x\right)  ,\text{
}t\in\mathbb{R}\text{, }x\in\mathbb{X}^{3}, \label{Dirac_1}%
\end{equation}
where $m\geq0$ (the mass), and
\[
\Psi^{T}\left(  t,x\right)  =\left[
\begin{array}
[c]{cccc}%
\Psi_{1}\left(  t,x\right)  & \Psi_{2}\left(  t,x\right)  & \Psi_{3}\left(
t,x\right)  & \Psi_{4}\left(  t,x\right)
\end{array}
\right]  \in\mathbb{C}^{4}.
\]
If $\mathbb{X=R}$, and $\boldsymbol{J}_{x_{i}}$ $=\frac{\partial}{\partial
x_{i}}$, \ then (\ref{Dirac_1}) is the standard Dirac equation.

We now define the free Hamiltonian as the operator%
\[
\boldsymbol{H}_{0}=\boldsymbol{\alpha}\cdot\nabla+\beta m=\left[
\begin{array}
[c]{cc}%
m\boldsymbol{1} & \boldsymbol{\sigma}\cdot\nabla\\
\boldsymbol{\sigma}\cdot\nabla & -m\boldsymbol{1}%
\end{array}
\right]  .
\]
Then, our abstract version of the Dirac equation can be rewritten as%
\begin{equation}
\mathrm{i}\frac{\partial}{\partial t}\Psi\left(  t,x\right)  =\boldsymbol{H}%
_{0}\Psi\left(  t,x\right)  \text{, }t\in\mathbb{R}\text{, }x\in\mathbb{X}%
^{3}\text{.} \label{Equ_dirac}%
\end{equation}

\subsection{The $\boldsymbol{J}$ operator}

We take $\mathbb{X}=\mathbb{Q}_{p}$, the field of $p$-adic numbers. In the
Appendix, we give a brief review of the essential ideas and results of
$p$-adic analysis required in this paper.

We fix a function $J:\left[  0,\infty\right)  \rightarrow\mathbb{R}$, such
that $J\left(  \left\vert y\right\vert _{p}\right)  :$ $\mathbb{Q}%
_{p}\rightarrow\mathbb{R}$ be an integrable function, i.e., $\left\Vert
J\right\Vert _{1}:=\int_{\mathbb{Q}_{p}}J\left(  \left\vert y\right\vert
_{p}\right)  dy<\infty$. Here $dy$ denotes the Haar measure on $\mathbb{Q}%
_{p}$; see the Appendix for further details.

Now, for $\rho\in\left[  1,\infty\right)  $, we define the operator%
\[%
\begin{array}
[c]{llll}%
\boldsymbol{J}: & L^{\rho}\left(  \mathbb{Z}_{p}\right)  & \rightarrow &
L^{\rho}\left(  \mathbb{Z}_{p}\right) \\
&  &  & \\
& \phi\left(  x\right)  & \rightarrow &
%TCIMACRO{\dint \limits_{\mathbb{Q}_{p}}}%
%BeginExpansion
{\displaystyle\int\limits_{\mathbb{Q}_{p}}}
%EndExpansion
J\left(  \left\vert x-y\right\vert _{p}\right)  \left\{  \phi\left(  y\right)
-\phi\left(  x\right)  \right\}  .
\end{array}
\]
Notice that $\boldsymbol{J}\phi\left(  x\right)  =J\left(  \left\vert
x\right\vert _{p}\right)  \ast\phi\left(  x\right)  -\left\Vert J\right\Vert
_{1}\phi\left(  x\right)  $, and since
\[
\left\Vert \boldsymbol{J}\phi\right\Vert _{\rho}\leq2\left\Vert J\right\Vert
_{1}\left\Vert \phi\right\Vert _{\rho},
\]
$\boldsymbol{J}$ gives rise to linear bounded operator from $L^{\rho}\left(
\mathbb{Z}_{p}\right)  $ into itself. Most of the time, we will take $\rho=2$.

In the case $\rho=2$, operator $\boldsymbol{J}$ is pseudodifferential. Indeed,
the Fourier transform of $J\left(  \left\vert x\right\vert _{p}\right)  $,
\[
\mathcal{F}_{x\rightarrow\xi}(J\left(  \left\vert x\right\vert _{p}\right)  )=%
%TCIMACRO{\dint \limits_{\mathbb{Q}_{p}}}%
%BeginExpansion
{\displaystyle\int\limits_{\mathbb{Q}_{p}}}
%EndExpansion
J\left(  \left\vert x\right\vert _{p}\right)  \chi_{p}\left(  \xi x\right)
dx=\widehat{J}\left(  \left\vert \xi\right\vert _{p}\right)
\]
is a continuous, real-valued, \ radial function. Furthermore, for $\phi\in
L^{2}\left(  \mathbb{Q}_{p}\right)  $,%
\[
\boldsymbol{J}\phi\left(  x\right)  =\mathcal{F}_{\xi\rightarrow x}\left\{
\left(  \widehat{J}\left(  \left\vert \xi\right\vert _{p}\right)  -\left\Vert
J\right\Vert _{1}\right)  \widehat{\phi}\left(  \xi\right)  \right\}  \in
L^{2}\left(  \mathbb{Q}_{p}\right)  ,
\]
where $\widehat{\phi}\left(  \xi\right)  $ denotes the Fourier transform of
$\phi\left(  x\right)  $. \ A particularly important formula is%
\begin{equation}
\boldsymbol{J}\chi_{p}\left(  kx\right)  =\left\{  \widehat{J}\left(
\left\vert \xi\right\vert _{p}\right)  -\left\Vert J\right\Vert _{1}\right\}
\chi_{p}\left(  kx\right)  , \label{Formula 1}%
\end{equation}
for $k$, $x\in\mathbb{Z}_{p}$. Here, $\chi_{p}$ denotes the standard additive
character of $(\mathbb{Q}_{p},+)$, which the $p$-adic analogue of the complex
exponential function; see the Appendix for further details.

The operator $\boldsymbol{J}$ is self-adjoint in $L^{2}\left(  \mathbb{Q}%
_{p}\right)  $:%
\begin{align*}
\left\langle \psi,\boldsymbol{J}\phi\right\rangle  &  =%
%TCIMACRO{\dint \limits_{\mathbb{Q}_{p}}}%
%BeginExpansion
{\displaystyle\int\limits_{\mathbb{Q}_{p}}}
%EndExpansion
\psi\left(  x\right)  \overline{\boldsymbol{J}\phi\left(  x\right)  }dx=%
%TCIMACRO{\dint \limits_{\mathbb{Q}_{p}}}%
%BeginExpansion
{\displaystyle\int\limits_{\mathbb{Q}_{p}}}
%EndExpansion
\widehat{\psi}\left(  \xi\right)  \overline{\mathcal{F}_{x\rightarrow\xi
}\left\{  \boldsymbol{J}\phi\left(  x\right)  \right\}  }d\xi\\
&  =%
%TCIMACRO{\dint \limits_{\mathbb{Q}_{p}}}%
%BeginExpansion
{\displaystyle\int\limits_{\mathbb{Q}_{p}}}
%EndExpansion
\widehat{\psi}\left(  \xi\right)  \left\{  \widehat{J}\left(  \left\vert
\xi\right\vert _{p}\right)  -\left\Vert J\right\Vert _{1}\right\}
\overline{\widehat{\phi}\left(  \xi\right)  }d\xi=%
%TCIMACRO{\dint \limits_{\mathbb{Q}_{p}}}%
%BeginExpansion
{\displaystyle\int\limits_{\mathbb{Q}_{p}}}
%EndExpansion
\boldsymbol{J}\psi\left(  x\right)  \overline{\phi\left(  x\right)
}dx=\left\langle \boldsymbol{J}\psi,\phi\right\rangle .
\end{align*}
Along the paper, we use the symbol $\left\langle \cdot,\cdot\right\rangle $ to
denote the inner product of $L^{2}\left(  \mathbb{Q}_{p}\right)  $. The
corresponding norm is denoted as $\left\Vert \cdot\right\Vert _{2}$.

\subsection{A new type of $p$-adic Dirac equations}

We now fix three functions $J_{l}:\left[  0,\infty\right)  \rightarrow
\mathbb{R}$, $l=1,2,3$, such that $J_{l}\left(  \left\vert y_{l}\right\vert
_{p}\right)  :$ $\mathbb{Q}_{p}\rightarrow\mathbb{R}$ is an integrable
function, $l=1,2,3$, and set%
\[
\boldsymbol{J}_{x_{l}}\phi\left(  x_{l}\right)  :=J_{l}\left(  \left\vert
x_{l}\right\vert _{p}\right)  \ast\phi\left(  x\right)  -\left\Vert
J_{l}\right\Vert _{1}\phi\left(  x_{l}\right)  \text{, }l=1,2,3,
\]
and $\nabla^{T}:=\left[
\begin{array}
[c]{lll}%
\boldsymbol{J}_{x_{1}} & \boldsymbol{J}_{x_{2}} & \boldsymbol{J}_{x_{3}}%
\end{array}
\right]  $. In the next three sections, we will show that
\begin{equation}
\mathrm{i}\frac{\partial}{\partial t}\Psi\left(  t,x\right)  =\boldsymbol{H}%
_{0}\Psi\left(  t,x\right)  \text{, }t\in\mathbb{R}\text{, }x\in\mathbb{Q}%
_{p}^{3}\text{,} \label{Equ_dirac_2}%
\end{equation}
is a new family of $p$-adic Dirac equations.

\section{\label{Section_Plane_Waves}Plane waves}

Given $x=\left(  x_{1},x_{2},x_{3}\right)  $, $k=%
\left(  k_{1},k_{2},k_{3}\right)  \in\mathbb{Q}_{p}^{3}$, we set%
\[
\left\vert \underline{k}\right\vert _{p}:=\left(  \left\vert k_{1}\right\vert
_{p},\left\vert k_{2}\right\vert _{p},\left\vert k_{3}\right\vert _{p}\right)
\in\mathbb{R}^{3},
\]%
\[
\underline{\widehat{J}\left(  \left\vert \underline{k}\right\vert _{p}\right)
}:=\left(  \widehat{J}_{1}\left(  \left\vert k_{1}\right\vert _{p}\right)
-\left\Vert J_{1}\right\Vert _{1},\widehat{J}_{2}\left(  \left\vert
k_{2}\right\vert _{p}\right)  -\left\Vert J_{2}\right\Vert _{1},\widehat
{J}_{3}\left(  \left\vert k_{3}\right\vert _{p}\right)  -\left\Vert
J_{3}\right\Vert _{1}\right)  \in\mathbb{R}^{3},
\]%
\[
\left\vert \text{ }\underline{\widehat{J}\left(  \left\vert \underline
{k}\right\vert _{p}\right)  }\right\vert ^{2}:=\left\{  \widehat{J}_{1}\left(
\left\vert k_{1}\right\vert _{p}\right)  -\left\Vert J_{1}\right\Vert
_{1}\right\}  ^{2}+\left\{  \widehat{J}_{2}\left(  \left\vert k_{2}\right\vert
_{p}\right)  -\left\Vert J_{2}\right\Vert _{1}\right\}  ^{2}+\left\{
\widehat{J}_{3}\left(  \left\vert k_{3}\right\vert _{p}\right)  -\left\Vert
J_{3}\right\Vert _{1}\right\}  ^{2},
\]
and%
\[
\boldsymbol{\alpha}\cdot\underline{\widehat{J}\left(  \left\vert \underline
{k}\right\vert _{p}\right)  }=%
%TCIMACRO{\dsum \limits_{l=1}^{3}}%
%BeginExpansion
{\displaystyle\sum\limits_{l=1}^{3}}
%EndExpansion
\alpha_{l}\left\{  \widehat{J}_{l}\left(  \left\vert k_{l}\right\vert
_{p}\right)  -\left\Vert J_{l}\right\Vert _{1},\widehat{J}_{l}\left(
\left\vert k_{l}\right\vert _{p}\right)  \right\}  .
\]
We \ recall that $k\cdot x=\sum_{i=1}^{3}k_{i}x_{i}\in\mathbb{Q}_{p}$, for
$k,x\in\mathbb{Q}_{p}^{3}$.

\begin{definition}
By a plane wave, we mean a function of the form%
\begin{equation}
\Psi\left(  t,x\right)  =e^{-iEt}\chi_{p}\left(  k\cdot x\right)  w\left(
k\right)  ,\text{ }k\text{, }x\in\mathbb{Z}_{p}^{3}\text{, }t\in
\mathbb{R}\text{,} \label{Equ_1}%
\end{equation}
where
\begin{align*}
E^{2}  &  =\left\{  \widehat{J}_{1}\left(  \left\vert k_{1}\right\vert
_{p}\right)  -\left\Vert J_{1}\right\Vert _{1}\right\}  ^{2}+\left\{
\widehat{J}_{2}\left(  \left\vert k_{2}\right\vert _{p}\right)  -\left\Vert
J_{2}\right\Vert _{1}\right\}  ^{2}+\left\{  \widehat{J}_{3}\left(  \left\vert
k_{3}\right\vert _{p}\right)  -\left\Vert J_{3}\right\Vert _{1}\right\}
^{2}+m^{2}\\
&  =\left\vert \text{ }\underline{\widehat{J}\left(  \left\vert \underline
{k}\right\vert _{p}\right)  }\right\vert ^{2}+m^{2},
\end{align*}
and%
\begin{equation}
w\left(  k\right)  =\left[
\begin{array}
[c]{c}%
w_{1}\left(  k\right) \\
\vdots\\
w_{4}\left(  k\right)
\end{array}
\right]  \in\mathbb{C}^{4}. \label{Equ_2}%
\end{equation}
The functions $w\left(  k\right)  =w\left(  \left\vert \underline
{k}\right\vert _{p}\right)  $ are given by%
\begin{equation}
w_{1}\left(  k\right)  =\left[
\begin{array}
[c]{r}%
\left[
\begin{array}
[c]{c}%
1\\
0
\end{array}
\right] \\
\\
\frac{\boldsymbol{\sigma}\cdot\underline{\widehat{J}\left(  \left\vert
\underline{k}\right\vert _{p}\right)  }}{E+m}\left[
\begin{array}
[c]{c}%
1\\
0
\end{array}
\right]
\end{array}
\right]  \text{,\hspace{2.5cm}}w_{2}\left(  k\right)  =\left[
\begin{array}
[c]{r}%
\left[
\begin{array}
[c]{c}%
0\\
1
\end{array}
\right] \\
\\
\frac{\boldsymbol{\sigma}\cdot\underline{\widehat{J}\left(  \left\vert
\underline{k}\right\vert _{p}\right)  }}{E+m}\left[
\begin{array}
[c]{c}%
0\\
1
\end{array}
\right]
\end{array}
\right]  , \label{Equ_3}%
\end{equation}%
\begin{equation}
w_{3}\left(  k\right)  =\left[
\begin{array}
[c]{r}%
\frac{-\boldsymbol{\sigma}\cdot\underline{\widehat{J}\left(  \left\vert
\underline{k}\right\vert _{p}\right)  }}{E+m}\left[
\begin{array}
[c]{c}%
0\\
1
\end{array}
\right] \\
\\
\left[
\begin{array}
[c]{c}%
0\\
1
\end{array}
\right]
\end{array}
\right]  \text{,\hspace{2.5cm}}w_{4}\left(  k\right)  =\left[
\begin{array}
[c]{r}%
\frac{-\boldsymbol{\sigma}\cdot\underline{\widehat{J}\left(  \left\vert
\underline{k}\right\vert _{p}\right)  }}{E+m}\left[
\begin{array}
[c]{c}%
1\\
0
\end{array}
\right] \\
\\
\left[
\begin{array}
[c]{c}%
1\\
0
\end{array}
\right]
\end{array}
\right]  . \label{Equ_4}%
\end{equation}

\end{definition}

\begin{proposition}
\label{Proposition_1}The $p$-adic Dirac equation admits plane waves of type
(\ref{Equ_1})-(\ref{Equ_4})\ as solutions.
\end{proposition}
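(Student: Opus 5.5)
Substitute the ansatz (\ref{Equ_1}) into (\ref{Equ_dirac_2}) and reduce everything to a $4\times4$ matrix eigenvalue problem in momentum space. Write $a_{l}:=\widehat{J}_{l}(|k_{l}|_{p})-\Vert J_{l}\Vert_{1}$, so that $\underline{\widehat{J}(|\underline{k}|_{p})}=(a_{1},a_{2},a_{3})\in\mathbb{R}^{3}$. Since $\chi_{p}(k\cdot x)=\prod_{l}\chi_{p}(k_{l}x_{l})$ and $\boldsymbol{J}_{x_{l}}$ acts only on the variable $x_{l}$, formula (\ref{Formula 1}) applied in each coordinate gives
\[
\boldsymbol{J}_{x_{l}}\chi_{p}(k\cdot x)=a_{l}\,\chi_{p}(k\cdot x),\qquad l=1,2,3 .
\]
Hence $\boldsymbol{H}_{0}\bigl(\chi_{p}(k\cdot x)w\bigr)=\chi_{p}(k\cdot x)\,H(k)w$ with $H(k):=\boldsymbol{\alpha}\cdot\underline{\widehat{J}(|\underline{k}|_{p})}+\beta m$. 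The left-hand side of the equation is $\mathrm{i}\partial_{t}\Psi=E\Psi$. It therefore suffices to show that $H(k)w_{j}(k)=E\,w_{j}(k)$ for $j=1,2$ and $H(k)w_{j}(k)=-E\,w_{j}(k)$ for $j=3,4$. For the last two, the plane wave carries $e^{+\mathrm{i}Et}$, i.e.\ $E$ is replaced by $-E$, the negative root of $E^{2}=|\underline{\widehat{J}}|^{2}+m^{2}$. We take $E>0$ (i.e.\ $E=+\sqrt{|\underline{\widehat J}|^{2}+m^{2}}$) in the denominators, and assume $m>0$ or $\underline{\widehat J}\neq 0$ so that $E+m\neq 0$.

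For the algebra, set $A:=\boldsymbol{\sigma}\cdot\underline{\widehat{J}}=\sum_{l}a_{l}\sigma_{l}$. The Pauli relations $\sigma_{l}\sigma_{j}+\sigma_{j}\sigma_{l}=2\delta_{lj}\boldsymbol{1}$ and the reality of the $a_{l}$ give
\[
A^{2}=|\underline{\widehat{J}}|^{2}\boldsymbol{1}=(E^{2}-m^{2})\boldsymbol{1}=(E-m)(E+m)\boldsymbol{1}.
\]
For $w=\begin{bmatrix}u\\ Au/(E+m)\end{bmatrix}$, the upper block of $H(k)w$ is $mu+A^{2}u/(E+m)=mu+(E-m)u=Eu$. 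The lower block is $Au-mAu/(E+m)=\frac{E}{E+m}Au$. This proves $H(k)w=Ew$, and taking $u=e_{1},e_{2}$ gives $w_{1},w_{2}$. Symmetrically, for $w=\begin{bmatrix}-Av/(E+m)\\ v\end{bmatrix}$, the upper block is $-mAv/(E+m)+Av=\frac{E}{E+m}Av$, which equals $-E$ times $-Av/(E+m)$. The lower block is $-A^{2}v/(E+m)-mv=-(E-m)v-mv=-Ev$. This gives $w_{3},w_{4}$ with eigenvalue $-E$.

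Each step is a routine computation; the only delicate point is the first one. There we need (\ref{Formula 1}) to hold for the coordinatewise operator on $\mathbb{Q}_{p}^{3}$, acting on characters with $k,x\in\mathbb{Z}_{p}^{3}$ (which are bounded but not in $L^{2}$). We handle this by computing directly. The substitution $y_{l}=x_{l}+z$ gives
\[
\int J_{l}(|x_{l}-y_{l}|_{p})\bigl(\chi_{p}(k_{l}y_{l})-\chi_{p}(k_{l}x_{l})\bigr)dy_{l}=\chi_{p}(k_{l}x_{l})\Bigl(\int J_{l}(|z|_{p})\chi_{p}(k_{l}z)\,dz-\Vert J_{l}\Vert_{1}\Bigr),
\]
and the integral converges absolutely because $J_{l}$ is integrable. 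We also need to fix the sign convention distinguishing the $\pm E$ branches, as indicated above, since the statement writes $e^{-iEt}$ for all four spinors.
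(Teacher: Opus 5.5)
Your argument is correct and follows the paper's route: substitute the ansatz, apply (\ref{Formula 1}) in each coordinate to reduce to the $4\times4$ eigenvalue problem $Ew(k)=\bigl(\boldsymbol{\alpha}\cdot\underline{\widehat{J}(|\underline{k}|_{p})}+\beta m\bigr)w(k)$, and solve it. The only difference is that the paper computes the determinant $\bigl(m^{2}+|\underline{\widehat{J}}|^{2}-E^{2}\bigr)^{2}$ and defers the eigenvectors to the classical computation in Bjorken--Drell and Greiner, whereas you check $w_{1},\dots,w_{4}$ directly from $A^{2}=(E^{2}-m^{2})\boldsymbol{1}$ and make explicit two points the paper leaves implicit: the $e^{+\mathrm{i}Et}$ convention for $w_{3},w_{4}$, and the character computation behind (\ref{Formula 1}).
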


\begin{proof}
The demonstration is just a variation of the classical calculation showing the
existence of plane waves for the Dirac equation. By replacing $\Psi\left(
t,x\right)  $, see (\ref{Equ_1}), in (\ref{Equ_dirac}), and using
$\frac{\partial}{\partial t}\Psi\left(  t,x\right)  =-iE\Psi\left(
t,x\right)  $, and formula (\ref{Formula 1}), one obtains that%
\begin{equation}
Ew\left(  k\right)  =\left(  \boldsymbol{\alpha}\cdot\underline{\widehat
{J}\left(  \left\vert \underline{k}\right\vert _{p}\right)  }+\beta m\right)
w\left(  k\right)  . \label{Eigenvalue_problem_1}%
\end{equation}
Which is a system of linear equations in the variables $w_{1}\left(  k\right)
,\ldots,w_{4}\left(  k\right)  $ with coefficients in the ring $\mathbb{A}%
:=\mathbb{C}\left[  \widehat{J}_{1}\left(  \left\vert k_{1}\right\vert
_{p}\right)  ,\widehat{J}_{2}\left(  \left\vert k_{2}\right\vert _{p}\right)
,\widehat{J}_{3}\left(  \left\vert k_{3}\right\vert _{p}\right)  \right]  $,
more precisely,%

\[
\left[
\begin{array}
[c]{llll}%
{\small -E+m} & {\small 0} & \text{{\small $\widehat{J}_{3}\left(  \left\vert
k_{3}\right\vert _{p}\right)  $}} & \text{$\widehat{J}_{1}\left(  \left\vert
k_{1}\right\vert _{p}\right)  $}{\small -}\mathrm{i}\text{$\widehat{J}%
_{2}\left(  \left\vert k_{2}\right\vert _{p}\right)  $}\\
&  &  & \\
{\small 0} & {\small -E+m} & \text{$\widehat{J}_{1}\left(  \left\vert
k_{1}\right\vert _{p}\right)  $}{\small +}\mathrm{i}\text{$\widehat{J}%
_{2}\left(  \left\vert k_{2}\right\vert _{p}\right)  $} & {\small -}%
\text{$\widehat{J}_{3}\left(  \left\vert k_{3}\right\vert _{p}\right)  $}\\
&  &  & \\
\text{{\small $\widehat{J}_{3}\left(  \left\vert k_{3}\right\vert _{p}\right)
$}} & \text{$\widehat{J}_{1}\left(  \left\vert k_{1}\right\vert _{p}\right)
$}{\small -}\text{\textrm{i}$\widehat{J}_{2}\left(  \left\vert k_{2}%
\right\vert _{p}\right)  $} & {\small -E-m} & {\small 0}\\
&  &  & \\
\text{$\widehat{J}_{1}\left(  \left\vert k_{1}\right\vert _{p}\right)  $%
}{\small +}\mathrm{i}\text{$\widehat{J}_{2}\left(  \left\vert k_{2}\right\vert
_{p}\right)  $} & {\small -}\text{$\widehat{J}_{3}\left(  \left\vert
k_{3}\right\vert _{p}\right)  _{p}$} & {\small 0} & {\small -E-m}\\
&  &  &
\end{array}
\right]
\]%
\[
\times\left[
\begin{array}
[c]{c}%
w_{1}\left(  k\right) \\
\\
w_{2}\left(  k\right) \\
\\
w_{3}\left(  k\right) \\
\\
w_{4}\left(  k\right)
\end{array}
\right]  {\small =}\left[
\begin{array}
[c]{c}%
0\\
\\
0\\
\\
0\\
\\
0
\end{array}
\right]  {\scriptsize .}%
\]
The condition for non-trivial solutions for $w\left(  k\right)  $ is that the
determinant of this system vanishes:
\[
\left(  m^{2}+\left\vert \text{ }\underline{\widehat{J}\left(  \left\vert
\underline{k}\right\vert _{p}\right)  }\right\vert ^{2}-E^{2}\right)  ^{2}=0.
\]
The calculation of the determinant is the same as in the classical case. Then,
necessarily
\[
E=\pm\sqrt{m^{2}+\left\vert \text{ }\underline{\widehat{J}\left(  \left\vert
\underline{k}\right\vert _{p}\right)  }\right\vert ^{2}}.
\]
We now consider (\ref{Eigenvalue_problem_1}) as\ an eigenvalue/eigenvector
problem in the ring $\mathbb{A}$. The solution of this problem follows the
classical reasoning and drives to the announced solutions; see, e.g.,
\cite{Bjorken}- \cite{Greiner}.
\end{proof}

\section{\label{Section_3}The free $p$-adic Dirac operator and its spectrum}

\subsection{\label{Section_Some_Function_Spaces}Some function spaces}

This section uses a notation similar to the one used in \cite[Chapter
1]{Thaller} to compare the standard and the $p$-adic Dirac operators quickly.
Furthermore, we use several results and calculations in \cite[Chapter
1]{Thaller}. We set%
\[
\mathfrak{H}=L^{2}\left(  \mathbb{Q}_{p}^{3}\right)
%TCIMACRO{\tbigoplus }%
%BeginExpansion
{\textstyle\bigoplus}
%EndExpansion
L^{2}\left(  \mathbb{Q}_{p}^{3}\right)
%TCIMACRO{\tbigoplus }%
%BeginExpansion
{\textstyle\bigoplus}
%EndExpansion
L^{2}\left(  \mathbb{Q}_{p}^{3}\right)
%TCIMACRO{\tbigoplus }%
%BeginExpansion
{\textstyle\bigoplus}
%EndExpansion
L^{2}\left(  \mathbb{Q}_{p}^{3}\right)  =L^{2}\left(  \mathbb{Q}_{p}%
^{3}\right)
%TCIMACRO{\tbigoplus }%
%BeginExpansion
{\textstyle\bigoplus}
%EndExpansion
\mathbb{C}^{4}=L^{2}\left(  \mathbb{Q}_{p}^{3}\right)  ^{4},
\]
and identify the elements of $\mathfrak{H}$\ with column vectors of the form%
\[
\psi\left(  x\right)  =\left[
\begin{array}
[c]{c}%
\psi_{1}\left(  x\right) \\
\vdots\\
\psi_{4}\left(  x\right)
\end{array}
\right]  \text{, }x\in\mathbb{Q}_{p}^{3}\text{.}%
\]
The inner product is given by%
\[
\left(  \psi\left(  x\right)  ,\phi\left(  x\right)  \right)  =%
%TCIMACRO{\dint \limits_{\mathbb{Q}_{p}^{3}}}%
%BeginExpansion
{\displaystyle\int\limits_{\mathbb{Q}_{p}^{3}}}
%EndExpansion
\text{ }%
%TCIMACRO{\dsum \limits_{l=1}^{4}}%
%BeginExpansion
{\displaystyle\sum\limits_{l=1}^{4}}
%EndExpansion
\psi_{l}\left(  x\right)  \overline{\phi}_{l}\left(  x\right)  d^{3}x=%
%TCIMACRO{\dsum \limits_{l=1}^{4}}%
%BeginExpansion
{\displaystyle\sum\limits_{l=1}^{4}}
%EndExpansion
\left\langle \psi_{l}\left(  x\right)  ,\phi_{l}\left(  x\right)
\right\rangle ,
\]
where the bar denotes the complex conjugate, and the norm is given by%
\[
\left\Vert \psi\right\Vert =\sqrt{%
%TCIMACRO{\dsum \limits_{l=1}^{4}}%
%BeginExpansion
{\displaystyle\sum\limits_{l=1}^{4}}
%EndExpansion
\text{ }%
%TCIMACRO{\dint \limits_{\mathbb{Q}_{p}^{3}}}%
%BeginExpansion
{\displaystyle\int\limits_{\mathbb{Q}_{p}^{3}}}
%EndExpansion
\left\vert \psi_{l}\left(  x\right)  \right\vert ^{2}d^{3}x}=\sqrt{%
%TCIMACRO{\dsum \limits_{l=1}^{4}}%
%BeginExpansion
{\displaystyle\sum\limits_{l=1}^{4}}
%EndExpansion
\left\Vert \psi_{l}\right\Vert _{2}^{2}}.
\]
Given an integrable function $\psi\in\mathfrak{H}$ its Fourier transform is
defined as%
\[
\left(  \mathcal{F}\psi\right)  \left(  \xi\right)  =\widehat{\psi}\left(
\xi\right)  =\left[
\begin{array}
[c]{c}%
\widehat{\psi}_{1}\left(  \xi\right) \\
\vdots\\
\widehat{\psi}_{4}\left(  \xi\right)
\end{array}
\right]  \text{, }\xi\in\mathbb{Q}_{p}^{3}\text{,}%
\]
where%
\[
\left(  \mathcal{F}\psi_{i}\right)  \left(  \xi\right)  =\widehat{\psi_{i}%
}\left(  \xi\right)  =%
%TCIMACRO{\dint \limits_{\mathbb{Q}_{p}^{3}}}%
%BeginExpansion
{\displaystyle\int\limits_{\mathbb{Q}_{p}^{3}}}
%EndExpansion
\chi_{p}\left(  \xi\boldsymbol{\cdot}x\right)  \psi_{i}\left(  x\right)
d^{3}x\text{, }\xi\in\mathbb{Q}_{p}^{3}\text{,}%
\]
for $i=1,2,3,4$. The Fourier transform extends to a uniquely defined operator
(denote as $\mathcal{F}$) in the Hilbert space $\mathfrak{H}$.

\begin{proposition}
\label{Proposition_0}With the above notation, it follows that the mapping%
\[%
\begin{array}
[c]{ccc}%
\mathfrak{H} & \rightarrow & \mathfrak{H}\\
&  & \\
\psi & \rightarrow & \boldsymbol{H}_{0}\psi
\end{array}
\]
is a bounded, self-adjoint, linear operator. Furthermore, the Cauchy problem%
\[
\left\{
\begin{array}
[c]{lll}%
\mathrm{i}\frac{\partial}{\partial t}\Psi\left(  t,x\right)  =\boldsymbol{H}%
_{0}\Psi\left(  t,x\right)  , & t\geq0, & x\in\mathbb{Q}_{p}^{3}\\
&  & \\
\Psi\left(  0,x\right)  =\psi_{0}\left(  x\right)  , &  &
\end{array}
\right.
\]
has a unique solution given by $\Psi\left(  t,x\right)  =e^{-it\boldsymbol{H}%
_{0}}\psi_{0}\left(  x\right)  $, where $\left\{  e^{-it\boldsymbol{H}_{0}%
}\right\}  _{t\geq0}$ is a group of unitary operators on $\mathfrak{H}$.
\end{proposition}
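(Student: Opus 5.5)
The plan is to show boundedness and self-adjointness of $\boldsymbol{H}_{0}$ on $\mathfrak{H}$ directly from the corresponding facts for the scalar operators $\boldsymbol{J}_{x_{l}}$. Then the Cauchy problem follows from standard functional calculus for bounded self-adjoint operators.

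First, each $\boldsymbol{J}_{x_{l}}$ acts on $L^{2}(\mathbb{Q}_{p}^{3})$ by convolution in the variable $x_{l}$ alone, minus $\left\Vert J_{l}\right\Vert _{1}$ times the identity. Passing to the Fourier transform in all three variables, it becomes multiplication by the real, bounded, continuous function $m_{l}(\xi):=\widehat{J}_{l}(|\xi_{l}|_{p})-\left\Vert J_{l}\right\Vert _{1}$, with $|m_{l}(\xi)|\leq2\left\Vert J_{l}\right\Vert _{1}$ since $|\widehat{J}_{l}|\leq\left\Vert J_{l}\right\Vert _{1}$. Alternatively, Minkowski's inequality in the variable $x_{l}$ together with Fubini gives the same bound. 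By Plancherel on $\mathbb{Q}_{p}^{3}$, each $\boldsymbol{J}_{x_{l}}$ is therefore bounded and self-adjoint on $L^{2}(\mathbb{Q}_{p}^{3})$; this is the computation already done for $\boldsymbol{J}$ in one variable, repeated verbatim. Consequently $\mathcal{F}\boldsymbol{H}_{0}\mathcal{F}^{-1}$ is multiplication by the $4\times4$ matrix-valued function
\[
h_{0}(\xi)=\sum_{l=1}^{3}\alpha_{l}m_{l}(\xi)+\beta m .
\]
This function is Hermitian for each $\xi$, because the $\alpha_{l}$ and $\beta$ are Hermitian and the $m_{l}$ are real. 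It is also uniformly bounded in operator norm, by at most $\sum_{l}2\left\Vert J_{l}\right\Vert _{1}+m$.

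Second, a multiplication operator by a bounded measurable Hermitian-matrix-valued function is bounded and self-adjoint on $L^{2}(\mathbb{Q}_{p}^{3})\otimes\mathbb{C}^{4}$. Since $\mathcal{F}$ is unitary on $\mathfrak{H}$, this transfers to $\boldsymbol{H}_{0}$. One can also avoid Fourier analysis and verify $(\boldsymbol{H}_{0}\psi,\phi)=(\psi,\boldsymbol{H}_{0}\phi)$ entrywise, using the symmetry of each $\boldsymbol{J}_{x_{l}}$ and the Hermiticity of the Dirac matrices; bounded and symmetric then means self-adjoint.

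Third, for a bounded self-adjoint $\boldsymbol{H}_{0}$ the exponential $e^{-it\boldsymbol{H}_{0}}=\sum_{n}(-it\boldsymbol{H}_{0})^{n}/n!$ converges in operator norm. It satisfies the group law, and it is unitary since its adjoint is $e^{it\boldsymbol{H}_{0}}$. It is norm-differentiable with derivative $-i\boldsymbol{H}_{0}e^{-it\boldsymbol{H}_{0}}$, so $\Psi(t)=e^{-it\boldsymbol{H}_{0}}\psi_{0}$ solves the Cauchy problem. For uniqueness, if $\Phi$ is another solution, set $\Theta(t)=e^{it\boldsymbol{H}_{0}}\Phi(t)$. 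Then $\Theta'(t)=0$, so $\Theta(t)=\psi_{0}$, which gives $\Phi(t)=\Psi(t)$; equivalently, $\left\Vert \Phi(t)-\Psi(t)\right\Vert$ is conserved and hence zero.

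The only step needing care is the first: making precise that the one-variable operator, which the paper defines on $L^{\rho}(\mathbb{Z}_{p})$ and $L^{2}(\mathbb{Q}_{p})$, extends to a bounded symmetric operator on $L^{2}(\mathbb{Q}_{p}^{3})$ acting in one coordinate. Fubini and the partial Fourier transform handle this routinely.
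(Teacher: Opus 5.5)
Your proposal is correct and follows the same route as the paper. The paper's proof only says that boundedness and self-adjointness of $\boldsymbol{H}_{0}$ are inherited from the $\boldsymbol{J}_{x_{l}}$ and that the Cauchy problem follows from Stone's theorem; you supply the Fourier-multiplier details, and in place of Stone's theorem you use the norm-convergent exponential series plus an explicit uniqueness argument via $e^{\mathrm{i}t\boldsymbol{H}_{0}}\Phi(t)$, which suffices for a bounded generator.

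One small nit: the bound $|\widehat{J}_{l}|\leq\Vert J_{l}\Vert_{1}$ requires $\Vert J_{l}\Vert_{1}$ to be the genuine $L^{1}$ norm $\int|J_{l}(|y|_{p})|\,dy$, not the signed integral in the paper's definition; this changes only the constant, not the conclusion.
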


\begin{proof}
The fact that $\boldsymbol{H}_{0}$ is a bounded, self-adjoint, linear operator
is a consequence of the fact that the operators $\boldsymbol{J}_{x_{l}}$,
$l=1,2,3$, have the mentioned properties. The existence and uniqueness of the
solution to the Cauchy problem follow from Stone's theorem.
\end{proof}

\subsection{The free Dirac operator in the Fourier space}

The results presented in this section are analogs of the results of the
standard Dirac operator. In particular, the calculations used here are the
same as the ones given in \cite[Section 1.4.1]{Thaller}. The Hamiltonian
$\boldsymbol{H}_{0}$ is a matrix pseudo-differential operator on
$\boldsymbol{J}_{x_{1}}$, $\boldsymbol{J}_{x_{2}}$, $\boldsymbol{J}_{x_{3}}$
defined on $\mathfrak{H}$. Any such operator is transformed via $\mathcal{F}$
into a matrix multiplication operator in $\mathfrak{H}$. In the case of
$\boldsymbol{H}_{0}$, we have
\begin{equation}
\left(  \boldsymbol{H}_{0}\phi\right)  \left(  x\right)  =\mathcal{F}%
_{\xi\rightarrow x}^{-1}\left(  h\left(  \xi\right)  \mathcal{F}%
_{x\rightarrow\xi}\phi\right)  ,\text{ for }\phi\in\mathfrak{H},
\label{pseudodifferetial_op_1}%
\end{equation}
where%
\[
h\left(  \xi\right)  :=\left[
\begin{array}
[c]{cc}%
m\boldsymbol{1} & \boldsymbol{\sigma}\cdot\underline{\widehat{J}\left(
\left\vert \underline{\xi}\right\vert _{p}\right)  }\\
\boldsymbol{\sigma}\cdot\underline{\widehat{J}\left(  \left\vert
\underline{\xi}\right\vert _{p}\right)  } & -m\boldsymbol{1}%
\end{array}
\right]  .
\]
The matrix $h\left(  \xi\right)  =h\left(  \left\vert \underline{\xi
}\right\vert _{p}\right)  $ is a $4\times4$ Hermitian matrix which has the
eigenvalues%
\begin{gather*}
\lambda_{1}\left(  \left\vert \underline{\xi}\right\vert _{p}\right)
=\lambda_{2}\left(  \left\vert \underline{\xi}\right\vert _{p}\right)
=-\lambda_{3}\left(  \left\vert \underline{\xi}\right\vert _{p}\right)
=-\lambda_{4}\left(  \left\vert \underline{\xi}\right\vert _{p}\right) \\
=:\lambda\left(  \left\vert \underline{\xi}\right\vert _{p}\right)
=\sqrt{\left\vert \text{ }\underline{\widehat{J}\left(  \left\vert
\underline{\xi}\right\vert _{p}\right)  }\right\vert ^{2}+m^{2}}.
\end{gather*}
We also use the notation $\lambda\left(  \xi\right)  =\lambda\left(
\left\vert \underline{\xi}\right\vert _{p}\right)  $.

The unitary transformation $u(\xi)=u\left(  \left\vert \underline{\xi
}\right\vert _{p}\right)  $, which diagonalizes $h(\xi)$ is%
\[
u(\xi)=\frac{\left(  m+\lambda\left(  \xi\right)  \right)  \boldsymbol{1}%
+\beta\boldsymbol{\alpha\cdot}\widehat{J}\left(  \left\vert \underline{\xi
}\right\vert _{p}\right)  }{\sqrt{2\lambda\left(  \xi\right)  \left(
m+\lambda\left(  \xi\right)  \right)  }}=a_{+}\left(  \xi\right)
\boldsymbol{1}+a_{-}\left(  \xi\right)  \beta\frac{\boldsymbol{\alpha\cdot
}\widehat{J}\left(  \left\vert \underline{\xi}\right\vert _{p}\right)  }%
{\sqrt{\left\vert \text{ }\underline{\widehat{J}\left(  \left\vert
\xi\right\vert _{p}\right)  }\right\vert ^{2}}},
\]%
\[
u^{-1}(\xi)=a_{+}\left(  \xi\right)  \boldsymbol{1}-a_{-}\left(  \xi\right)
\beta\frac{\boldsymbol{\alpha\cdot}\widehat{J}\left(  \left\vert
\underline{\xi}\right\vert _{p}\right)  }{\sqrt{\left\vert \text{ }%
\underline{\widehat{J}\left(  \left\vert \underline{\xi}\right\vert
_{p}\right)  }\right\vert ^{2}}},
\]
where $\boldsymbol{1}$ is \ the $4\times4$ matrix identity,
\[
a_{\pm}\left(  \xi\right)  =\frac{1}{\sqrt{2}}\sqrt{1\pm\frac{m}%
{\lambda\left(  \xi\right)  }},
\]
and the diagonal form of $h(\xi)$ is%
\begin{equation}
u^{-1}(\xi)h\left(  \xi\right)  u(\xi)=\beta\lambda\left(  \xi\right)  .
\label{diagonalization}%
\end{equation}
By using (\ref{pseudodifferetial_op_1}) and (\ref{diagonalization}), the
unitary transformation
\[
\mathcal{W}:=u\mathcal{F}:\mathfrak{H}\rightarrow\mathfrak{H}%
\]
converts the $p$-adic Dirac operator $\boldsymbol{H}_{0}$ into a
multiplication operator by the diagonal matrix $\beta\lambda\left(
\xi\right)  $,%
\begin{equation}
\boldsymbol{H}_{0}=\mathcal{W}^{-1}\beta\lambda\left(  \xi\right)  \mathcal{W}
\label{diagonalization_2}%
\end{equation}
in $\mathfrak{H}$.

\subsection{The spectrum of $\boldsymbol{H}_{0}$}

In the Hilbert space $\mathcal{W}\mathfrak{H}$, the $p$-adic Dirac operator is
diagonal, see (\ref{diagonalization_2}). The upper two components of the
wavefunctions belong to positive energies, while the lower two components
belong to the negative energies.\ Following, Thaller's book \cite[Section
1.4.2]{Thaller}, we introduce \ the subspaces\ of positive energies
\ $\mathfrak{H}_{pos}\subset\mathfrak{H}$ spanned\ by vectors $\psi_{pos}$,
and negative energies $\mathfrak{H}_{neg}\subset\mathfrak{H}$ spanned by
vectors $\psi_{neg}$, where%
\[
\psi_{pos}=\mathcal{W}^{-1}\frac{1}{2}\left(  \boldsymbol{1}+\beta\right)
\mathcal{W}\psi\text{, \ }\psi_{neg}=\mathcal{W}^{-1}\frac{1}{2}\left(
\boldsymbol{1}-\beta\right)  \mathcal{W}\psi\text{, \ }\psi\in\mathfrak{H,}%
\]
where $\boldsymbol{1}$ is the $4\times4$ identity matrix. Since $\left(
\boldsymbol{1}+\beta\right)  \left(  \boldsymbol{1}-\beta\right)
=\boldsymbol{0}$, $\mathfrak{H}_{pos}$ is orthogonal to $\mathfrak{H}_{neg}$,
then%
\begin{equation}
\mathfrak{H=H}_{pos}%
%TCIMACRO{\tbigoplus }%
%BeginExpansion
{\textstyle\bigoplus}
%EndExpansion
\mathfrak{H}_{neg}\text{.} \label{partition_1}%
\end{equation}
Taking
\begin{equation}
\phi_{\pm}:=\frac{1}{2}\left(  \boldsymbol{1}\pm\beta\right)  \mathcal{W}\psi,
\label{partition_2}%
\end{equation}
we have
\[
\left(  \psi_{pos},\boldsymbol{H}_{0}\psi_{pos}\right)  =\left(
\mathcal{W}^{-1}\phi_{+},\mathcal{W}^{-1}\beta\lambda\left(  \xi\right)
\phi_{+}\right)  =\left(  \phi_{+},\beta\lambda\left(  \xi\right)  \phi
_{+}\right)  =\left(  \phi_{+},\lambda\left(  \xi\right)  \phi_{+}\right)
>0,
\]
which means that $\mathfrak{H}_{pos}$ is invariant under $\boldsymbol{H}_{0}$.
Similarly, one shows that $\mathfrak{H}_{neg}$ is invariant under
$\boldsymbol{H}_{0}$. The orthogonal projection operators onto the
positive/negative energy subspaces are given by%
\begin{equation}
\mathcal{P}_{\substack{pos\\neg}}=\mathcal{W}^{-1}\frac{1}{2}\left(
\boldsymbol{1}\pm\beta\right)  \mathcal{W}=\frac{1}{2}\left(  \boldsymbol{1}%
\pm\frac{\boldsymbol{H}_{0}}{\left\vert \boldsymbol{H}_{0}\right\vert
}\right)  , \label{Opertor_P}%
\end{equation}
where $\boldsymbol{1}$ is the identity \ operator on $\mathfrak{H}$, and
$\left\vert \boldsymbol{H}_{0}\right\vert $ is the pseudo-differential
operator on $\mathfrak{H}$ with symbol%
\[
\sqrt{\left\vert \text{ }\underline{\widehat{J}\left(  \left\vert
\underline{\xi}\right\vert _{p}\right)  }\right\vert ^{2}+m^{2}%
}\text{ }\boldsymbol{1}\text{. }%
\]
We identify $\left\vert \boldsymbol{H}_{0}\right\vert $ with the operator
$\sqrt{\boldsymbol{H}_{0}^{2}}=\sqrt{\left(  \boldsymbol{J}_{x_{1}}%
^{2}+\boldsymbol{J}_{x_{2}}^{2}+\boldsymbol{J}_{x_{3}}^{2}\right)  +m^{2}}$.
Like in the standard case, we have
\[
\boldsymbol{H}_{0}\psi_{\substack{pos\\neg}}=\pm\left\vert \boldsymbol{H}%
_{0}\right\vert \psi_{\substack{pos\\neg}},
\]
and if we define $sgn$ $\boldsymbol{H}_{0}=\frac{\boldsymbol{H}_{0}%
}{\left\vert \boldsymbol{H}_{0}\right\vert }$, then $\boldsymbol{H}%
_{0}=\left\vert \boldsymbol{H}_{0}\right\vert sgn$ $\boldsymbol{H}_{0}$, which
is polar decomposition \ of $\boldsymbol{H}_{0}$.

Again, following the classical case, we define the Foldy-Wouthuysen
transformation as%
\[
\mathcal{U}_{FW}=\mathcal{F}^{-1}\mathcal{W}.
\]
It transforms the free Dirac operator into the pseudo-differential operator%
\begin{align*}
\mathcal{U}_{FW}\boldsymbol{H}_{0}\mathcal{U}_{FW}^{-1}  &  =\left[
\begin{array}
[c]{cc}%
\sqrt{\left(  \boldsymbol{J}_{x_{1}}^{2}+\boldsymbol{J}_{x_{2}}^{2}%
+\boldsymbol{J}_{x_{3}}^{2}\right)  +m^{2}} & \boldsymbol{0}\\
\boldsymbol{0} & -\sqrt{\left(  \boldsymbol{J}_{x_{1}}^{2}+\boldsymbol{J}%
_{x_{2}}^{2}+\boldsymbol{J}_{x_{3}}^{2}\right)  +m^{2}}%
\end{array}
\right] \\
&  =\beta\left\vert \boldsymbol{H}_{0}\right\vert .
\end{align*}
We interpret this formula as the fact that the free Dirac equation \ is
unitarily equivalent to a pair of (two component) square-root Klein-Gordon equations.

\begin{theorem}
The free Dirac operator $\boldsymbol{H}_{0}$ is self-adjoint on $\mathfrak{H.}%
$ Its spectrum $\sigma\left(  \boldsymbol{H}_{0}\right)  $ is the union of the
essential range of the functions $\pm\lambda\left(  \xi\right)  :\mathbb{Q}%
_{p}^{3}\rightarrow\mathbb{R}$.
\end{theorem}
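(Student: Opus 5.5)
Self-adjointness has already been established in Proposition \ref{Proposition_0}: each $\boldsymbol{J}_{x_{l}}$ is bounded and self-adjoint on $L^{2}(\mathbb{Q}_{p}^{3})$, and $\alpha_{l}$ and $\beta$ are Hermitian constant matrices. For the spectrum, the plan is to reduce everything to a multiplication operator using the unitary equivalence (\ref{diagonalization_2}), namely $\boldsymbol{H}_{0}=\mathcal{W}^{-1}\beta\lambda(\xi)\mathcal{W}$ with $\mathcal{W}=u\mathcal{F}$. This requires $\mathcal{W}$ to be unitary on $\mathfrak{H}$. The Fourier transform $\mathcal{F}$ is unitary on $L^{2}(\mathbb{Q}_{p}^{3})^{4}$ by Plancherel. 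The map $u$, acting by pointwise multiplication by the unitary $4\times4$ matrix $u(\xi)$, is unitary provided $\xi\mapsto u(\xi)$ is measurable.

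\emph{Measurability and the degenerate set.} Measurability holds because $\widehat{J}_{l}$ is continuous, so $\lambda(\xi)$ and $a_{\pm}(\xi)$ are continuous. The one point needing care is the set where $\underline{\widehat{J}(|\underline{\xi}|_{p})}=0$, on which the normalized vector $\boldsymbol{\alpha}\cdot\widehat{J}/|\widehat{J}|$ is undefined. There $h(\xi)=\beta m$ is already diagonal, so we set $u(\xi)=\boldsymbol{1}$ on that set. If $m=0$ as well, then $a_{-}$ is irrelevant and any choice works; when $m=0$ and $\widehat{J}\neq0$ we have $a_{\pm}=1/\sqrt{2}$. With these conventions $u$ is a measurable unitary-valued function.

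\emph{Spectrum of the diagonal operator.} Since unitary equivalence preserves spectra, $\sigma(\boldsymbol{H}_{0})=\sigma(M)$, where $M$ is multiplication by $\mathrm{diag}(\lambda,\lambda,-\lambda,-\lambda)$ on $L^{2}(\mathbb{Q}_{p}^{3})^{4}$. This $M$ is the orthogonal direct sum of four scalar multiplication operators, and the spectrum of a finite direct sum is the union of the spectra. It then remains to invoke the standard fact that multiplication by a measurable function $f$ on $L^{2}(\mathbb{Q}_{p}^{3},d^{3}\xi)$ has spectrum equal to the essential range of $f$. Its proof goes in two directions:
\begin{itemize}
\item If $z$ is not in the essential range, then $|f-z|\geq\varepsilon$ almost everywhere, so $(f-z)^{-1}$ is bounded and gives the resolvent.
\item If $z$ is in the essential range, the sets $\{|f-z|<1/n\}$ have positive Haar measure, and their normalized indicator functions (restricted to a compact set, so they lie in $L^{2}$, using $\sigma$-finiteness of Haar measure) form a Weyl sequence.
\end{itemize}
Applying this with $f=\pm\lambda$ gives the stated union.

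\emph{Main obstacle.} The only real subtlety is the degenerate locus where $\widehat{J}$ vanishes, together with the resulting measurability and unitarity of $u$ there; I handle it by the pointwise redefinition above. Everything else is the Archimedean argument of \cite[Section 1.4]{Thaller} transplanted verbatim, with Lebesgue measure replaced by Haar measure. As a remark, since $\lambda\geq m$ and $\lambda$ is continuous, the essential range equals the closure of the range on the support of Haar measure, that is on all of $\mathbb{Q}_{p}^{3}$. Because $\lambda$ is radial in each coordinate, this closure is the closure of a countable set of values, and is contained in $(-\infty,-m]\cup[m,\infty)$.
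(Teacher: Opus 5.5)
Your proposal is correct and follows the paper's own argument. Self-adjointness comes from Proposition \ref{Proposition_0}; the unitary equivalence (\ref{diagonalization_2}) turns $\boldsymbol{H}_{0}$ into multiplication by $\beta\lambda(\xi)$; and the standard fact \cite[Section VII.2]{Reed-Simon-I} identifies the spectrum of a multiplication operator with the essential range of its symbol. Your extra care with measurability, and with the locus where $\underline{\widehat{J}(\left\vert \underline{\xi}\right\vert _{p})}$ vanishes, fills in a detail the paper leaves implicit; that locus can have positive Haar measure, e.g.\ it contains $\mathbb{Z}_{p}^{3}$ when the kernels are supported in the unit ball.
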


\begin{remark}
We denote by $\sigma\left(  \boldsymbol{H}_{0}^{Arch}\right)  $ the spectrum
of the standard free Dirac operator, with the\ normalization $c=1$; by
\cite[Theorem 1.1]{Thaller},
\[
\sigma\left(  \boldsymbol{H}_{0}^{Arch}\right)  =\left(  -\infty,-m\right]
\cup\left[  m,\infty\right)  .
\]
Then $\sigma\left(  \boldsymbol{H}_{0}\right)  \subset\sigma\left(
\boldsymbol{H}_{0}^{Arch}\right)  $.
\end{remark}

\begin{proof}
The result follows from Proposition \ref{Proposition_0}, and the fact that by
(\ref{diagonalization_2}) the spectrum of $\boldsymbol{H}_{0}$ equals the
spectrum of the multiplication operator $\beta\lambda\left(  \xi\right)  $,
which is the essential range of the functions \ $\pm\lambda\left(  \xi\right)
$; see \cite[Section VII.2]{Reed-Simon-I}.
\end{proof}

\section{\label{Section_4}Charge conjugation}

Following the standard case, see \cite[Section 1.4.6]{Thaller}, the $p$-adic
Dirac operator for a charge $e\in\mathbb{R}$ in an external electromagnetic
field $\left(  \phi,\boldsymbol{A}\right)  \in\mathbb{R}\times\mathbb{R}^{3}$
is given by%
\[
\boldsymbol{H}(e):=\boldsymbol{\alpha}\cdot\left(  \nabla-e\boldsymbol{A}%
\left(  t,x\right)  \right)  +\beta m+e\phi\left(  t,x\right)  \boldsymbol{1}%
\text{.}%
\]
We define the charge conjugation $\mathcal{C}$ as the antiunitary
transformation%
\[
\mathcal{C}\Psi=U_{\mathcal{C}}\overline{\Psi},
\]
where $U_{\mathcal{C}}=-i\beta\alpha_{2}$ is a $4\times4$ unitary matrix.

\begin{lemma}
[{\cite[Lemma 5.1]{Zuniga-Dirac-Causality}}]With the above notation, if
$\Psi\left(  t,x\right)  $ is a solution of
\begin{equation}
\mathrm{i}\frac{\partial}{\partial t}\Psi\left(  t,x\right)  =\boldsymbol{H}%
\left(  e\right)  \Psi\left(  t,x\right)  , \label{Eq_15}%
\end{equation}
then%
\[
\mathrm{i}\frac{\partial}{\partial t}\mathcal{C}\Psi\left(  t,x\right)
=\boldsymbol{H}\left(  -e\right)  \mathcal{C}\Psi\left(  t,x\right)  .
\]
Moreover, $\mathcal{C}^{-1}\boldsymbol{H}\left(  e\right)  \mathcal{C}%
=-\boldsymbol{H}\left(  -e\right)  $.
\end{lemma}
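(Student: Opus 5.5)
The plan is to imitate the classical computation, using two structural facts. First, the operators $\boldsymbol{J}_{x_l}$ commute with complex conjugation, because the kernels $J_l(|y|_p)$ are real-valued; explicitly, $\overline{\boldsymbol{J}_{x_l}\phi}=\boldsymbol{J}_{x_l}\overline{\phi}$. Second, $U_{\mathcal{C}}=-\mathrm{i}\beta\alpha_2$ satisfies the Clifford-algebra identities of the standard case. Since $\boldsymbol{J}_{x_l}$ is a convolution with a real kernel minus a real multiple of the identity, conjugation passes through it.

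The key algebraic identities come from $\beta^2=\alpha_l^2=\boldsymbol{1}$ together with the anticommutation relations $\alpha_l\alpha_k+\alpha_k\alpha_l=2\delta_{lk}$ and $\alpha_l\beta+\beta\alpha_l=0$. With the standard representation given in the paper, $\overline{\alpha_1}=\alpha_1$, $\overline{\alpha_3}=\alpha_3$, $\overline{\alpha_2}=-\alpha_2$ and $\overline{\beta}=\beta$. I would verify the following:
\[
U_{\mathcal{C}}\overline{\alpha_l}\,U_{\mathcal{C}}^{-1}=\alpha_l,\qquad U_{\mathcal{C}}\overline{\beta}\,U_{\mathcal{C}}^{-1}=-\beta ,
\]
for $l=1,2,3$. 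This is a direct check: $U_{\mathcal{C}}$ commutes with $\alpha_1$ and $\alpha_3$, and anticommutes with $\alpha_2$ and $\beta$. I would also check that $U_{\mathcal{C}}$ is unitary, and the behavior of $\mathcal{C}^2$ if needed.

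Next I would compute $\mathcal{C}\boldsymbol{H}(e)\mathcal{C}^{-1}$ acting on $\Phi$, where $\mathcal{C}^{-1}\Phi=\overline{U_{\mathcal{C}}^{-1}\Phi}$. Conjugation leaves $\boldsymbol{J}_{x_l}$, $\boldsymbol{A}$, $\phi$, $m$ and $e$ unchanged, since all of these are real. It conjugates only the matrices, and the matrices commute with the scalar operators $\boldsymbol{J}_{x_l}$. Together with the identities above, this gives
\[
\mathcal{C}\boldsymbol{H}(e)\mathcal{C}^{-1}=\boldsymbol{\alpha}\cdot(\nabla-e\boldsymbol{A})-\beta m+e\phi=-\boldsymbol{H}(-e).
\]
Hence $\mathcal{C}^{-1}\boldsymbol{H}(e)\mathcal{C}=-\boldsymbol{H}(-e)$ as well; applying the same computation with $e$ replaced by $-e$ gives the symmetric form.

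For the dynamical statement, I would apply $\mathcal{C}$ to (\ref{Eq_15}). Antilinearity gives $\mathcal{C}(\mathrm{i}\partial_t\Psi)=-\mathrm{i}\partial_t\mathcal{C}\Psi$, using that $\partial_t$ commutes with conjugation and with $U_{\mathcal{C}}$. The right-hand side becomes $\mathcal{C}\boldsymbol{H}(e)\Psi=-\boldsymbol{H}(-e)\mathcal{C}\Psi$. Cancelling the sign yields the claim.

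The only step that needs genuine care is the reality argument for $\boldsymbol{J}_{x_l}$. Unlike $\partial/\partial x_l$, these operators are not conjugated to their negatives: in the Archimedean case the factor $-\mathrm{i}\nabla$ flips sign under conjugation. So the bookkeeping differs from Thaller's. Here there is no $\mathrm{i}$ in front of $\nabla$, so conjugation fixes $\boldsymbol{\alpha}\cdot\nabla$ only up to $\overline{\alpha_2}=-\alpha_2$, which $U_{\mathcal{C}}$ corrects. I would double-check the sign of the $\boldsymbol{\alpha}\cdot\nabla$ term carefully, since this is where the argument could fail; if needed I would follow the cited \cite[Lemma 5.1]{Zuniga-Dirac-Causality}, whose proof uses only the same reality of the symbol.
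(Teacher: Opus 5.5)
Your computation is correct up to its last equality, and that equality is false. Because the $\boldsymbol{J}_{x_l}$ are real, and $U_{\mathcal{C}}\overline{\alpha_l}U_{\mathcal{C}}^{-1}=\alpha_l$, $U_{\mathcal{C}}\overline{\beta}U_{\mathcal{C}}^{-1}=-\beta$, you do obtain
\[
\mathcal{C}\boldsymbol{H}(e)\mathcal{C}^{-1}=\boldsymbol{\alpha}\cdot(\nabla-e\boldsymbol{A})-\beta m+e\phi ,\qquad\text{but}\qquad -\boldsymbol{H}(-e)=-\boldsymbol{\alpha}\cdot(\nabla+e\boldsymbol{A})-\beta m+e\phi .
\]
Hence $\mathcal{C}\boldsymbol{H}(e)\mathcal{C}^{-1}+\boldsymbol{H}(-e)=2\,\boldsymbol{\alpha}\cdot\nabla$, which is nonzero as soon as one kernel $J_l$ is nonzero. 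The mass and potential terms match, but the kinetic term has the wrong sign. This happens already for $e=0$: you get $\boldsymbol{\alpha}\cdot\nabla-\beta m$, not $-\boldsymbol{H}_0$. The dynamical claim fails along with it. Your antilinearity step correctly gives $\mathrm{i}\partial_t\mathcal{C}\Psi=-\mathcal{C}\boldsymbol{H}(e)\mathcal{C}^{-1}\,\mathcal{C}\Psi$, and the operator on the right differs from $\boldsymbol{H}(-e)$ by $-2\,\boldsymbol{\alpha}\cdot\nabla$.

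This is exactly the point you flagged. Neither a more careful sign check nor a fallback on \cite[Lemma 5.1]{Zuniga-Dirac-Causality} can repair it; the paper gives no argument beyond that citation, and any proof using only the reality of the symbols lands where you did. In Thaller's setting the kinetic term flips sign because of the $-\mathrm{i}$ in $-\mathrm{i}\nabla$. Here $\boldsymbol{\alpha}\cdot\nabla$ and $-e\,\boldsymbol{\alpha}\cdot\boldsymbol{A}$ are both real scalar operators multiplied by the $\alpha_l$. So any antiunitary $\Psi\mapsto U\overline{\Psi}$ with a constant matrix $U$ transforms both by the same rule $\alpha_l\mapsto U\overline{\alpha_l}U^{-1}$. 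But $-\boldsymbol{H}(-e)$ requires the first term to change sign and the second not to. Consequently, for $\boldsymbol{H}(e)$ as defined in this paper, the identity $\mathcal{C}^{-1}\boldsymbol{H}(e)\mathcal{C}=-\boldsymbol{H}(-e)$ does not hold. Moreover, when $e\boldsymbol{A}\neq 0$ and $J_1,J_2,J_3$ are all nonzero, no choice of $U$ rescues it. The right conclusion of your sign check is that the statement itself must be modified. For example, if $\boldsymbol{A}=0$, the antiunitary map $\Psi\mapsto\alpha_2\overline{\Psi}$ satisfies both conclusions, because $\alpha_2\overline{\alpha_l}\alpha_2=-\alpha_l$ for $l=1,2,3$ and $\alpha_2\beta\alpha_2=-\beta$.
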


Then negative energy subspace of $\boldsymbol{H}\left(  e\right)  $ is
connected via a symmetry transformation with the positive energy subspace of
the Dirac operator $\boldsymbol{H}\left(  -e\right)  $ for a particle with
opposite charge (antiparticle, positron). For $\mathcal{C}\psi\left(
x\right)  $ \ in the positive energy subspace of $\boldsymbol{H}\left(
-e\right)  $, by interpreting $\left\vert \mathcal{C}\psi\left(  x\right)
\right\vert ^{2}$ as \textit{a position probability density}, the equality%
\[
\left\vert \mathcal{C}\psi\left(  x\right)  \right\vert ^{2}=\left\vert
\psi\left(  x\right)  \right\vert ^{2}%
\]
shows that the motion of a negative energy electron state $\psi$ is
indistinguishable from that of a positive energy positron. Then, one obtains
the interpretation:%
\[
\text{a state }\psi\in\mathfrak{H}_{neg}\text{ describes an antiparticle with
positive energy.}%
\]

\section{\label{Section_CTQW_I}Continuous-time quantum walks I}

In this section, we take $\boldsymbol{H}$ to be a linear, bounded,
self-adjoint operator on $\mathfrak{H}$, not necessarily $\boldsymbol{H}_{0}$;
so, $\left\{  e^{-\mathrm{i}t\boldsymbol{H}}\right\}  _{t\in\mathbb{R}}$ is a
group of unitary operators on $\mathfrak{H}$. We take $\psi_{\boldsymbol{J}%
}\left(  x\right)  \in\mathfrak{H}$, with $\left\Vert \psi_{\boldsymbol{J}%
}\right\Vert =1$, where $\boldsymbol{J}$ belongs to a countable set, which
will be defined below, and set%
\[
\Psi_{\boldsymbol{J}}\left(  t,x\right)  =e^{-\mathrm{i}t\boldsymbol{H}}%
\psi_{\boldsymbol{J}}\left(  x\right)  =\left[
\begin{array}
[c]{l}%
\Psi_{\boldsymbol{J}}^{\left(  1\right)  }\left(  t,x\right) \\
\vdots\\
\Psi_{\boldsymbol{J}}^{\left(  4\right)  }\left(  t,x\right)
\end{array}
\right]  ,\text{ }t\geq0,x\in\mathbb{Q}_{p}^{3}.
\]
Then $\left\Vert \Psi_{\boldsymbol{J}}\left(  t,x\right)  \right\Vert =1$, for
$t\geq0$.

For a positive integer $l\geq1$, set
\[
\widetilde{G}_{l}^{3}:=\mathbb{Q}_{p}^{3}/p^{l}\mathbb{Z}_{p}^{3}=\left(
\mathbb{Q}_{p}/p^{l}\mathbb{Z}_{p}\right)  ^{3}.
\]
We set $\widetilde{G}_{l}:=\mathbb{Q}_{p}/p^{l}\mathbb{Z}_{p}$, and use a set
of representatives of the \ form
\[
I=I_{-m}p^{-m}+I_{-m+1}p^{-m+1}+\ldots+I_{0}+\ldots+I_{l-1}p^{l-1},
\]
where the digits $I_{k}$ run over the set $\left\{  0,1,\ldots,p-1\right\}  $.
The elements of $\widetilde{G}_{l}^{3}$ are denoted as $\boldsymbol{I}%
=(I_{1},I_{2},I_{3})$, with $I_{1},I_{2},I_{3}\in\widetilde{G}_{l}$. Notice
that $\widetilde{G}_{l}^{3}$ is a countable set.

We now consider the covering
\begin{equation}
\mathbb{Q}_{p}^{3}=%
%TCIMACRO{\dbigsqcup \limits_{\boldsymbol{I}\in\widetilde{G}_{l}^{3}}}%
%BeginExpansion
{\displaystyle\bigsqcup\limits_{\boldsymbol{I}\in\widetilde{G}_{l}^{3}}}
%EndExpansion
\text{ }\left(  \boldsymbol{I}+p^{l}\mathbb{Z}_{p}^{3}\right)  ,
\label{Partition_A}%
\end{equation}
where the balls $\boldsymbol{I}+p^{l}\mathbb{Z}_{p}^{3}$ are pairwise
disjoint. We denote by $\Omega\left(  p^{l}\left\Vert x-\boldsymbol{J}%
\right\Vert _{p}\right)  $ the characteristic function of the ball
$\boldsymbol{I}+p^{l}\mathbb{Z}_{p}^{3}$.

We now pick $\psi_{\boldsymbol{J}}\left(  x\right)  =c_{\boldsymbol{J}}%
\Omega\left(  p^{l}\left\Vert x-\boldsymbol{J}\right\Vert _{p}\right)  $,
\ $c_{\boldsymbol{J}}\in\mathbb{C}$, with $\left\Vert \psi_{\boldsymbol{J}%
}\left(  x\right)  \right\Vert =\left\Vert \psi_{\boldsymbol{J}}\left(
x\right)  \right\Vert _{2}=1$, and set%

\[
\widetilde{\pi}_{\boldsymbol{I},\boldsymbol{J}}^{\left(  k\right)  }(t)=%
%TCIMACRO{\dint \limits_{\boldsymbol{I}+p^{l}\mathbb{Z}_{p}^{3}}}%
%BeginExpansion
{\displaystyle\int\limits_{\boldsymbol{I}+p^{l}\mathbb{Z}_{p}^{3}}}
%EndExpansion
\text{ }\left\vert \Psi_{\boldsymbol{J}}^{\left(  k\right)  }\left(
t,x\right)  \right\vert ^{2}d^{3}x\text{, \ and \ \ }\widetilde{\pi
}_{\boldsymbol{I},\boldsymbol{J}}(t)=%
%TCIMACRO{\dsum \limits_{k=1}^{4}}%
%BeginExpansion
{\displaystyle\sum\limits_{k=1}^{4}}
%EndExpansion
\widetilde{\pi}_{\boldsymbol{I},\boldsymbol{J}}^{\left(  k\right)  }(t).
\]
Now, using partition (\ref{Partition_A}),
\begin{align*}
1  &  =\left\Vert \Psi_{\boldsymbol{J}}\left(  t,x\right)  \right\Vert ^{2}=%
%TCIMACRO{\dsum \limits_{k=1}^{4}}%
%BeginExpansion
{\displaystyle\sum\limits_{k=1}^{4}}
%EndExpansion
\text{ }%
%TCIMACRO{\dint \limits_{\mathbb{Q}_{p}^{3}}}%
%BeginExpansion
{\displaystyle\int\limits_{\mathbb{Q}_{p}^{3}}}
%EndExpansion
\text{ }\left\vert \Psi_{\boldsymbol{J}}^{\left(  k\right)  }\left(
t,x\right)  \right\vert ^{2}d^{3}x=%
%TCIMACRO{\dsum \limits_{k=1}^{4}}%
%BeginExpansion
{\displaystyle\sum\limits_{k=1}^{4}}
%EndExpansion
\text{ }%
%TCIMACRO{\dsum \limits_{\boldsymbol{I}\in\widetilde{G}_{l}^{3}}}%
%BeginExpansion
{\displaystyle\sum\limits_{\boldsymbol{I}\in\widetilde{G}_{l}^{3}}}
%EndExpansion
\text{ }%
%TCIMACRO{\dint \limits_{\boldsymbol{I}+p^{l}\mathbb{Z}_{p}^{3}}}%
%BeginExpansion
{\displaystyle\int\limits_{\boldsymbol{I}+p^{l}\mathbb{Z}_{p}^{3}}}
%EndExpansion
\text{ }\left\vert \Psi_{\boldsymbol{J}}^{\left(  k\right)  }\left(
t,x\right)  \right\vert ^{2}d^{3}x\\
&  =%
%TCIMACRO{\dsum \limits_{k=1}^{4}}%
%BeginExpansion
{\displaystyle\sum\limits_{k=1}^{4}}
%EndExpansion
\text{ }%
%TCIMACRO{\dsum \limits_{\boldsymbol{I}\in\widetilde{G}_{l}^{3}}}%
%BeginExpansion
{\displaystyle\sum\limits_{\boldsymbol{I}\in\widetilde{G}_{l}^{3}}}
%EndExpansion
\text{ }\widetilde{\pi}_{\boldsymbol{I},\boldsymbol{J}}^{\left(  k\right)
}(t)=%
%TCIMACRO{\dsum \limits_{\boldsymbol{I}\in\widetilde{G}_{l}^{3}}}%
%BeginExpansion
{\displaystyle\sum\limits_{\boldsymbol{I}\in\widetilde{G}_{l}^{3}}}
%EndExpansion
\widetilde{\pi}_{\boldsymbol{I},\boldsymbol{J}}(t)\text{.}%
\end{align*}

The CTQW is a system with an infinite but countable set of states,
$\boldsymbol{I}\in\widetilde{G}_{l}^{3}$. The transition probability from
state $\boldsymbol{J}$ to state $\boldsymbol{I}$ at time $t$ is $\widetilde
{\pi}_{\boldsymbol{I},\boldsymbol{J}}(t)$. This random walk is constructed
entirely from the Born rule.

\section{\label{Section_CTQW_II}Continuous-time quantum walks II}

\subsection{Preliminary results}

In this section, we work with the $p$-adic Dirac Hamiltonian $\boldsymbol{H}%
_{0}$, but select the kernels $J_{1}$, $J_{2}$, $J_{3}$ such that the
condition%
\begin{equation}
e^{-\mathrm{i}t\boldsymbol{H}_{0}}\left(  L^{2}(\mathbb{Z}_{p}^{3})%
%TCIMACRO{\tbigotimes }%
%BeginExpansion
{\textstyle\bigotimes}
%EndExpansion
\mathbb{C}^{4}\right)  \subset L^{2}(\mathbb{Z}_{p}^{3})%
%TCIMACRO{\tbigotimes }%
%BeginExpansion
{\textstyle\bigotimes}
%EndExpansion
\mathbb{C}^{4} \label{Condition}%
\end{equation}
is satisfied for every $t$.

For a positive integer $l\geq1$, take $G_{l}:=\mathbb{Z}_{p}/p^{l}%
\mathbb{Z}_{p}$, and set
\[
I=I_{0}+I_{1}p++I_{l-1}p^{l-1},
\]
with the $I_{k}\in\left\{  0,1,\ldots,p-1\right\}  $, as a set of
representatives of the elements of the quotient group $G_{l}$. We set
\[
G_{l}^{3}=\left(  \mathbb{Z}_{p}/p^{l}\mathbb{Z}_{p}\right)  ^{3}%
=\mathbb{Z}_{p}^{3}/p^{l}\mathbb{Z}_{p}^{3}.
\]
Notice that the group $G_{l}^{3}$ has $p^{3l}$ elements. We use
$\boldsymbol{I}=\left(  I_{1},I_{2},I_{3}\right)  $ to denote the elements
\ of $G_{l}^{3}$.

We now fix the following covering of $\mathbb{Z}_{p}^{3}$:%
\[
\mathbb{Z}_{p}^{3}=%
%TCIMACRO{\dbigsqcup \limits_{\boldsymbol{I}\in G_{l}^{3}}}%
%BeginExpansion
{\displaystyle\bigsqcup\limits_{\boldsymbol{I}\in G_{l}^{3}}}
%EndExpansion
\text{ }\left(  \boldsymbol{I}+p^{l}\mathbb{Z}_{p}^{3}\right)  .
\]
We pick $\psi_{\boldsymbol{J}}\left(  x\right)  =c_{\boldsymbol{J}}%
\Omega\left(  p^{l}\left\Vert x-\boldsymbol{J}\right\Vert _{p}\right)  $, with
$c_{\boldsymbol{J}}\in\mathbb{C}$, $\boldsymbol{J}\in G_{l}^{3}$, such that
$\left\Vert \psi_{\boldsymbol{J}}\right\Vert =1$, then by (\ref{Condition}),%
\[
\Psi_{\boldsymbol{J}}\left(  t,x\right)  =e^{-\mathrm{i}t\boldsymbol{H}_{0}%
}\psi_{\boldsymbol{J}}\left(  x\right)  =\left[
\begin{array}
[c]{l}%
\Psi_{\boldsymbol{J}}^{\left(  1\right)  }\left(  t,x\right) \\
\vdots\\
\Psi_{\boldsymbol{J}}^{\left(  4\right)  }\left(  t,x\right)
\end{array}
\right]  \in L^{2}(\mathbb{Z}_{p}^{3})%
%TCIMACRO{\tbigotimes }%
%BeginExpansion
{\textstyle\bigotimes}
%EndExpansion
\mathbb{C}^{4}\text{, \ and }\left\Vert \Psi_{\boldsymbol{J}}\left(
t,\boldsymbol{\cdot}\right)  \right\Vert =1\text{ for }t\geq0\text{.}%
\]
We identify a function of $L^{2}(\mathbb{Z}_{p}^{3})$, with a function of
$L^{2}(\mathbb{Q}_{p}^{3})$\ supported on $\mathbb{Z}_{p}^{3}$. Taking
$\mathfrak{H}_{0}=L^{2}(\mathbb{Z}_{p}^{3})%
%TCIMACRO{\tbigotimes }%
%BeginExpansion
{\textstyle\bigotimes}
%EndExpansion
\mathbb{C}^{4}$, we have that $\mathfrak{H}_{0}$\ is a Hilbert subspace of
$\mathfrak{H}$. We now denote by
\[
e_{j}^{T}:=\left[
\begin{array}
[c]{lllll}%
0 & \ldots & 1 & \ldots & 0
\end{array}
\right]  ,\text{ for }j=1,2,3,4\text{,}%
\]
the standard basis of $\mathbb{C}^{4}.$

\begin{lemma}
\label{Lemma1}Set
\begin{equation}
\psi_{j,\boldsymbol{J}}\left(  x\right)  =e_{j}p^{\frac{3l}{2}}\Omega\left(
p^{l}\left\Vert x-\boldsymbol{J}\right\Vert _{p}\right)  \text{, for
}j=1,2,3,4\text{, }\boldsymbol{J}\in G_{l}^{3}. \label{psi_j_J}%
\end{equation}
Then, $\left\{  \psi_{j,\boldsymbol{J}}\left(  x\right)  ;j=1,2,3,4\text{,
}\boldsymbol{J}\in G_{l}^{3}\right\}  $ is an orthonormal set in
$\mathfrak{H}$.
\end{lemma}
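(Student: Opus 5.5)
The plan is to compute the inner products $\left(\psi_{j,\boldsymbol{J}},\psi_{k,\boldsymbol{K}}\right)$ directly from the definition of the inner product of $\mathfrak{H}$ given in Section \ref{Section_Some_Function_Spaces}. By that definition,
\[
\left(\psi_{j,\boldsymbol{J}},\psi_{k,\boldsymbol{K}}\right)=\sum_{r=1}^{4}\left\langle \left(\psi_{j,\boldsymbol{J}}\right)_{r},\left(\psi_{k,\boldsymbol{K}}\right)_{r}\right\rangle .
\]
The $r$-th component of $\psi_{j,\boldsymbol{J}}$ is $\delta_{rj}\,p^{\frac{3l}{2}}\Omega\left(p^{l}\left\Vert x-\boldsymbol{J}\right\Vert _{p}\right)$. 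Hence the sum collapses to
\[
\delta_{jk}\,p^{3l}\int_{\mathbb{Q}_{p}^{3}}\Omega\left(p^{l}\left\Vert x-\boldsymbol{J}\right\Vert _{p}\right)\Omega\left(p^{l}\left\Vert x-\boldsymbol{K}\right\Vert _{p}\right)d^{3}x ,
\]
since the characteristic functions are real. The first step is therefore to separate the spinor index from the spatial index. This already gives orthogonality whenever $j\neq k$, because $e_{j}^{T}e_{k}=\delta_{jk}$.

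The second step handles the spatial integral. The product of the two characteristic functions is the characteristic function of $\left(\boldsymbol{J}+p^{l}\mathbb{Z}_{p}^{3}\right)\cap\left(\boldsymbol{K}+p^{l}\mathbb{Z}_{p}^{3}\right)$. Since $\boldsymbol{J},\boldsymbol{K}$ are representatives of distinct classes in $G_{l}^{3}=\mathbb{Z}_{p}^{3}/p^{l}\mathbb{Z}_{p}^{3}$ when $\boldsymbol{J}\neq\boldsymbol{K}$, these balls are pairwise disjoint by the covering of $\mathbb{Z}_{p}^{3}$ fixed above. Alternatively, one can argue by the ultrametric property: two balls of the same radius either coincide or are disjoint, and they coincide iff $\left\Vert \boldsymbol{J}-\boldsymbol{K}\right\Vert _{p}\leq p^{-l}$, which for representatives with digits in $\{0,\dots,p-1\}$ up to $p^{l-1}$ forces $\boldsymbol{J}=\boldsymbol{K}$. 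So the integral vanishes when $\boldsymbol{J}\neq\boldsymbol{K}$.

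The last step is the normalization. When $\boldsymbol{J}=\boldsymbol{K}$, the integral equals the Haar measure of $\boldsymbol{J}+p^{l}\mathbb{Z}_{p}^{3}$. By translation invariance and the scaling property of the Haar measure (normalized so that $\mathbb{Z}_{p}^{3}$ has measure $1$; see the Appendix), this equals $\left(p^{-l}\right)^{3}=p^{-3l}$. Multiplying by $p^{3l}$ gives $1$.

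The only point requiring any care is the disjointness of distinct balls, i.e. that the chosen representatives are pairwise incongruent modulo $p^{l}\mathbb{Z}_{p}^{3}$; this is immediate from the uniqueness of $p$-adic digit expansions. The rest is routine.
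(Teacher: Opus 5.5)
Your proof is correct and is essentially the paper's argument. The paper simply asserts the identity $\bigl(\psi_{j,\boldsymbol{J}},\psi_{i,\boldsymbol{I}}\bigr)=\delta_{j,i}\delta_{\boldsymbol{I},\boldsymbol{J}}$, and you supply the routine details behind it: the spinor index separates via $e_j^{T}e_i=\delta_{j,i}$, distinct representatives in $G_l^3$ give disjoint balls, and each ball $\boldsymbol{J}+p^{l}\mathbb{Z}_p^3$ has Haar measure $p^{-3l}$.
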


\begin{proof}
The result follows from%
\[
\left(  e_{j}p^{\frac{3l}{2}}\Omega\left(  p^{l}\left\Vert x-\boldsymbol{J}%
\right\Vert _{p}\right)  \text{, }e_{i}p^{\frac{3l}{2}}\Omega\left(
p^{l}\left\Vert x-\boldsymbol{I}\right\Vert _{p}\right)  \right)
=\delta_{j,i}\delta_{\boldsymbol{I},\boldsymbol{J}}\text{.}%
\]

\end{proof}

We define the $\mathbb{C}$-vector space $\mathcal{D}_{l}(\mathbb{Z}_{p}^{3})$
as the space spanned by
\[
\left\{  \Omega\left(  p^{l}\left\Vert x-\boldsymbol{J}\right\Vert
_{p}\right)  \right\}  _{\boldsymbol{J}\in G_{l}^{3}}.
\]
Notice that the cardinality of this set is $p^{3l}$.

\begin{lemma}
\label{Lemma2}Assume that the functions $J_{l}$ \ are supported in the unit
ball; more precisely, and that $J_{l}:\left[  0,1\right]  \rightarrow
\mathbb{R}$, and $J_{l}\left(  \left\vert y_{l}\right\vert _{p}\right)  :$
$\mathbb{Z}_{p}\rightarrow\mathbb{R}$ are integrable functions, for $l=1,2,3$.
Then%
\[
e^{-\mathrm{i}t\boldsymbol{H}_{0}}\left(  \mathcal{D}_{l}(\mathbb{Z}_{p}^{3})%
%TCIMACRO{\tbigotimes }%
%BeginExpansion
{\textstyle\bigotimes}
%EndExpansion
\mathbb{C}^{4}\right)  \subset\mathcal{D}_{l}(\mathbb{Z}_{p}^{3})%
%TCIMACRO{\tbigotimes }%
%BeginExpansion
{\textstyle\bigotimes}
%EndExpansion
\mathbb{C}^{4}\subset L^{2}(\mathbb{Z}_{p}^{3})%
%TCIMACRO{\tbigotimes }%
%BeginExpansion
{\textstyle\bigotimes}
%EndExpansion
\mathbb{C}^{4}.
\]

\end{lemma}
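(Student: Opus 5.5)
The plan is to reduce the statement to a purely algebraic invariance property of $\boldsymbol{H}_{0}$, and then pass to the exponential by a finite-dimensional argument. Set $V:=\mathcal{D}_{l}(\mathbb{Z}_{p}^{3})\bigotimes\mathbb{C}^{4}$. This is a finite-dimensional subspace of $\mathfrak{H}$ of dimension $4p^{3l}$, and by Lemma \ref{Lemma1} it has orthonormal basis $\{\psi_{j,\boldsymbol{J}}\}$. In particular $V$ is closed. The inclusion $V\subset L^{2}(\mathbb{Z}_{p}^{3})\bigotimes\mathbb{C}^{4}$ is immediate, since every $\Omega(p^{l}\Vert x-\boldsymbol{J}\Vert_{p})$ with $\boldsymbol{J}\in G_{l}^{3}$ is supported in $\mathbb{Z}_{p}^{3}$. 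Thus the real content is $e^{-\mathrm{i}t\boldsymbol{H}_{0}}V\subset V$.

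\textbf{Step 1: each $\boldsymbol{J}_{x_{m}}$ preserves $\mathcal{D}_{l}(\mathbb{Z}_{p}^{3})$, for $m=1,2,3$.} I will characterize $\mathcal{D}_{l}(\mathbb{Z}_{p}^{3})$ as the set of functions $\phi$ that satisfy two conditions:
\begin{itemize}
\item[(a)] $\phi$ is supported in $\mathbb{Z}_{p}^{3}$;
\item[(b)] $\phi(x+h)=\phi(x)$ for all $h\in p^{l}\mathbb{Z}_{p}^{3}$.
\end{itemize}
This characterization holds because such $\phi$ are exactly the functions constant on each ball $\boldsymbol{I}+p^{l}\mathbb{Z}_{p}^{3}$, $\boldsymbol{I}\in G_{l}^{3}$, and zero elsewhere.

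Now take $\phi\in\mathcal{D}_{l}(\mathbb{Z}_{p}^{3})$ and write
\[
\boldsymbol{J}_{x_{m}}\phi(x)=\int_{\mathbb{Q}_{p}}J_{m}(|x_{m}-y|_{p})\,\phi(x_{1},..,y,..,x_{3})\,dy-\Vert J_{m}\Vert_{1}\phi(x),
\]
where $y$ sits in the $m$-th slot. The second term obviously lies in $\mathcal{D}_{l}$, so it suffices to treat the convolution term, and I check (a) and (b) for it separately.

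For (b): convolution in the variable $x_{m}$ commutes with translations in every variable. Hence invariance of $\phi$ under $p^{l}\mathbb{Z}_{p}^{3}$ passes to the convolution term, after the change of variables $y\mapsto y+h_{m}$, which is legitimate because Haar measure is translation invariant.

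For (a): if some $|x_{k}|_{p}>1$ with $k\neq m$, the integrand vanishes because $\phi$ does. If instead $|x_{m}|_{p}>1$, then for every $y\in\mathbb{Z}_{p}$ the ultrametric property gives $|x_{m}-y|_{p}=|x_{m}|_{p}>1$. Since $J_{m}$ is supported in $[0,1]$, the kernel is then zero on the support of the integrand. This is exactly where the hypothesis on the support of the $J_{m}$ is used.

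\textbf{Step 2: $\boldsymbol{H}_{0}V\subset V$.} The operator $\boldsymbol{H}_{0}=\sum_{m}\alpha_{m}\boldsymbol{J}_{x_{m}}+\beta m$ acts componentwise through the $\boldsymbol{J}_{x_{m}}$ and constant $4\times4$ matrices. By Step 1 it therefore maps $V$ into $V$.

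\textbf{Step 3: pass to the exponential.} By Proposition \ref{Proposition_0}, $\boldsymbol{H}_{0}$ is bounded, so $e^{-\mathrm{i}t\boldsymbol{H}_{0}}=\sum_{n\geq0}(-\mathrm{i}t)^{n}\boldsymbol{H}_{0}^{n}/n!$ with convergence in operator norm. By Step 2 every partial sum maps $V$ into $V$. Since $V$ is closed, the limit also maps $V$ into $V$.

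I expect the only genuinely delicate point to be the support argument in Step 1. The other ingredients are the translation-invariance bookkeeping, which is routine, and the interpretation of the one-variable convolution as acting on functions of three variables, which I take from the definition of $\boldsymbol{J}_{x_{m}}$.
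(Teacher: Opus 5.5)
Your proof is correct, and it reaches the key step by a genuinely different route from the paper. The two arguments share the same outer structure: show that each $\boldsymbol{J}_{x_m}$ preserves $\mathcal{D}_l(\mathbb{Z}_p^3)$, hence $\boldsymbol{H}_0$ preserves $V$, hence so does the exponential. They differ in how the first step is established.

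The paper gets the first step from an explicit formula, quoted from \cite[Lemma 13.1]{Zuniga-Chacon}. It expands $J\ast\Omega(p^{l}|x-K|_{p})$ in the basis $\{\Omega(p^{l}|x-I|_{p})\}_{I\in G_l}$, with off-diagonal coefficients $p^{-l}J(|I-K|_{p})$ and diagonal coefficient $\int_{p^{l}\mathbb{Z}_{p}}J(|y|_{p})\,dy$. It then leaves the tensoring over the three coordinates, and the passage from $\boldsymbol{H}_0$ to $e^{-\mathrm{i}t\boldsymbol{H}_0}$, as a ``direct calculation''.

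You instead characterize $\mathcal{D}_l(\mathbb{Z}_p^3)$ intrinsically as the $p^{l}\mathbb{Z}_p^{3}$-periodic functions supported in $\mathbb{Z}_p^3$. You then check that partial convolution preserves periodicity, because it commutes with translations. It preserves the support condition by the ultrametric inequality together with $\operatorname{supp}J_m\subset\mathbb{Z}_p$. Finally, you make the exponential step explicit through the norm-convergent series and the closedness of the finite-dimensional $V$.

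What each approach buys:
\begin{itemize}
\item Your route needs no computation and fills in the step the paper leaves implicit. It also shows that invariance uses only translation invariance of convolution and the support condition on the kernel, not its radial form.
\item The paper's route costs a short calculation, but it produces the actual matrix entries of $\boldsymbol{H}_0$ restricted to $V$. This is the $4p^{3l}\times 4p^{3l}$ Hermitian matrix of Remark \ref{Nota2} that concretely defines the walk, and your argument does not give it.
\end{itemize}

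The support hypothesis enters both proofs at the same point: it forces the kernel to vanish whenever the relevant coordinate satisfies $|x_m|_p>1$.
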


\begin{proof}
The proof follows from the following claim by a direct calculation:

\textbf{Claim }(\cite[Lemma 13.1]{Zuniga-Chacon}). Assume that $J\left(
\left\vert x\right\vert _{p}\right)  $\ is an integrable function supported in
the unit ball. Then
\begin{align*}
J\left(  x\right)  \ast\Omega\left(  p^{l}\left\vert x-K\right\vert
_{p}\right)   &  =p^{-l}{\displaystyle\sum\limits_{I\in G_{l}\smallsetminus
\left\{  K\right\}  }}J\left(  \left\vert I-K\right\vert _{p}\right)
\Omega\left(  p^{l}\left\vert x-I\right\vert _{p}\right)  +\\
&  \left(  \text{ }{\displaystyle\int\limits_{p^{l}\mathbb{Z}_{p}}}J\left(
\left\vert y\right\vert _{p}\right)  dy\right)  \Omega\left(  p^{l}\left\vert
x-K\right\vert _{p}\right)  ,
\end{align*}
for $K\in G_{l}$.
\end{proof}

\begin{remark}
\label{Nota2}Alternatively, we have%
\[
\boldsymbol{H}_{0}\left(  \mathcal{D}_{l}(\mathbb{Z}_{p}^{3})%
%TCIMACRO{\tbigotimes }%
%BeginExpansion
{\textstyle\bigotimes}
%EndExpansion
\mathbb{C}^{4}\right)  \subset\mathcal{D}_{l}(\mathbb{Z}_{p}^{3})%
%TCIMACRO{\tbigotimes }%
%BeginExpansion
{\textstyle\bigotimes}
%EndExpansion
\mathbb{C}^{4}.
\]
Since the dimension of the $\mathbb{C}$-vector space $\mathcal{D}%
_{l}(\mathbb{Z}_{p}^{3})%
%TCIMACRO{\tbigotimes }%
%BeginExpansion
{\textstyle\bigotimes}
%EndExpansion
\mathbb{C}^{4}$ is $4p^{3l}$, the restriction of the operator $\boldsymbol{H}%
_{0}$ to $\mathcal{D}_{l}(\mathbb{Z}_{p}^{3})%
%TCIMACRO{\tbigotimes }%
%BeginExpansion
{\textstyle\bigotimes}
%EndExpansion
\mathbb{C}^{4}$ is given by a Hermitian matrix of size $4p^{3l}\times4p^{3l}$.
This restriction is naturally interpreted as the discretization of
$\boldsymbol{H}_{0}$ on $G_{l}^{3}$. In addition, the restriction of
$e^{-\mathrm{i}t\boldsymbol{H}_{0}}$ $\mathcal{D}_{l}(\mathbb{Z}_{p}^{3})%
%TCIMACRO{\tbigotimes }%
%BeginExpansion
{\textstyle\bigotimes}
%EndExpansion
\mathbb{C}^{4}$ is given by a unitary matrix of size $4p^{3l}\times4p^{3l}$.
\end{remark}

\subsection{CTQWs II}

Take $\psi_{j,\boldsymbol{J}}\left(  x\right)  $\ as in (\ref{psi_j_J}), and
set
\[
\Psi_{j,\boldsymbol{J}}\left(  t,x\right)  =e^{-\mathrm{i}t\boldsymbol{H}_{0}%
}\psi_{j,\boldsymbol{J}}\left(  x\right)  =\left[
\begin{array}
[c]{l}%
\Psi_{j,\boldsymbol{J}}^{\left(  1\right)  }\left(  t,x\right) \\
\vdots\\
\Psi_{j,\boldsymbol{J}}^{\left(  4\right)  }\left(  t,x\right)
\end{array}
\right]  \in\mathcal{D}_{l}(\mathbb{Z}_{p}^{3})%
%TCIMACRO{\tbigotimes }%
%BeginExpansion
{\textstyle\bigotimes}
%EndExpansion
\mathbb{C}^{4}\text{, for every }t\geq0.
\]
Then
\begin{equation}
\left\Vert \Psi_{j,\boldsymbol{J}}\left(  t,x\right)  \right\Vert =1\text{,
for every }t\geq0\text{.} \label{normalization}%
\end{equation}
Now, by Lemma \ref{Lemma2}, $\Psi_{j,\boldsymbol{J}}\left(  t,x\right)
\in\mathcal{D}_{l}(\mathbb{Z}_{p}^{3})%
%TCIMACRO{\tbigotimes }%
%BeginExpansion
{\textstyle\bigotimes}
%EndExpansion
\mathbb{C}^{4}$, and by Lemma \ref{Lemma1},
\[
\Psi_{j,\boldsymbol{J}}\left(  t,x\right)  =%
%TCIMACRO{\dsum \limits_{i=1}^{4}}%
%BeginExpansion
{\displaystyle\sum\limits_{i=1}^{4}}
%EndExpansion
\text{ }%
%TCIMACRO{\dsum \limits_{\boldsymbol{I}\in G_{l}^{3}}}%
%BeginExpansion
{\displaystyle\sum\limits_{\boldsymbol{I}\in G_{l}^{3}}}
%EndExpansion
\text{ }c_{i,\boldsymbol{I}}e_{i}p^{\frac{3l}{2}}\Omega\left(  p^{l}\left\Vert
x-\boldsymbol{I}\right\Vert _{p}\right)  =%
%TCIMACRO{\dsum \limits_{i=1}^{4}}%
%BeginExpansion
{\displaystyle\sum\limits_{i=1}^{4}}
%EndExpansion
e_{i}\left\{
%TCIMACRO{\dsum \limits_{\boldsymbol{I}\in G_{l}^{3}}}%
%BeginExpansion
{\displaystyle\sum\limits_{\boldsymbol{I}\in G_{l}^{3}}}
%EndExpansion
\text{ }c_{i,\boldsymbol{I}}p^{\frac{3l}{2}}\Omega\left(  p^{l}\left\Vert
x-\boldsymbol{I}\right\Vert _{p}\right)  \right\}  .
\]

\begin{remark}
\label{Nota1}The condition $\Omega\left(  p^{l}\left\Vert x-\boldsymbol{I}%
\right\Vert _{p}\right)  \Omega\left(  p^{l}\left\Vert x-\boldsymbol{J}%
\right\Vert _{p}\right)  =0$, for $\boldsymbol{I}\neq\boldsymbol{J}$, which
follows from the fact that the balls $\boldsymbol{I}+p^{l}\mathbb{Z}_{p}^{3}$,
$\boldsymbol{J}+p^{l}\mathbb{Z}_{p}^{3}$ are disjoint, implies that%
\[
\left\vert
%TCIMACRO{\dsum \limits_{\boldsymbol{I}\in G_{l}^{3}}}%
%BeginExpansion
{\displaystyle\sum\limits_{\boldsymbol{I}\in G_{l}^{3}}}
%EndExpansion
\text{ }c_{i,\boldsymbol{I}}p^{\frac{3l}{2}}\Omega\left(  p^{l}\left\Vert
x-\boldsymbol{I}\right\Vert _{p}\right)  \right\vert ^{2}=%
%TCIMACRO{\dsum \limits_{\boldsymbol{I}\in G_{l}^{3}}}%
%BeginExpansion
{\displaystyle\sum\limits_{\boldsymbol{I}\in G_{l}^{3}}}
%EndExpansion
\text{ }\left\vert c_{i,\boldsymbol{I}}\right\vert ^{2}p^{3l}\Omega\left(
p^{l}\left\Vert x-\boldsymbol{I}\right\Vert _{p}\right)  .
\]

\end{remark}

Then, from (\ref{normalization}), \ using Remark \ref{Nota1}, and the notation%
\[
\pi_{\boldsymbol{I},\boldsymbol{J}}\left(  t\right)  :=%
%TCIMACRO{\dsum \limits_{i=1}^{4}}%
%BeginExpansion
{\displaystyle\sum\limits_{i=1}^{4}}
%EndExpansion
\left\vert c_{i,\boldsymbol{I}}\right\vert ^{2},
\]
we obtain that
\begin{align*}
1  &  =%
%TCIMACRO{\dsum \limits_{i=1}^{4}}%
%BeginExpansion
{\displaystyle\sum\limits_{i=1}^{4}}
%EndExpansion
\left\Vert
%TCIMACRO{\dsum \limits_{\boldsymbol{I}\in G_{l}^{3}}}%
%BeginExpansion
{\displaystyle\sum\limits_{\boldsymbol{I}\in G_{l}^{3}}}
%EndExpansion
\text{ }c_{i,\boldsymbol{I}}p^{\frac{3l}{2}}\Omega\left(  p^{l}\left\Vert
x-\boldsymbol{I}\right\Vert _{p}\right)  \right\Vert _{2}^{2}=%
%TCIMACRO{\dsum \limits_{i=1}^{4}}%
%BeginExpansion
{\displaystyle\sum\limits_{i=1}^{4}}
%EndExpansion
\text{ }%
%TCIMACRO{\dsum \limits_{\boldsymbol{I}\in G_{l}^{3}}}%
%BeginExpansion
{\displaystyle\sum\limits_{\boldsymbol{I}\in G_{l}^{3}}}
%EndExpansion
\text{ }\left\vert c_{i,\boldsymbol{I}}\right\vert ^{2}%
%TCIMACRO{\dint \limits_{\mathbb{Z}_{p}^{3}}}%
%BeginExpansion
{\displaystyle\int\limits_{\mathbb{Z}_{p}^{3}}}
%EndExpansion
p^{3l}\Omega\left(  p^{l}\left\Vert x-\boldsymbol{I}\right\Vert _{p}\right)
d^{3}x\\
&  =%
%TCIMACRO{\dsum \limits_{i=1}^{4}}%
%BeginExpansion
{\displaystyle\sum\limits_{i=1}^{4}}
%EndExpansion
\text{ }%
%TCIMACRO{\dsum \limits_{\boldsymbol{I}\in G_{l}^{3}}}%
%BeginExpansion
{\displaystyle\sum\limits_{\boldsymbol{I}\in G_{l}^{3}}}
%EndExpansion
\text{ }\left\vert c_{i,\boldsymbol{I}}\right\vert ^{2}=%
%TCIMACRO{\dsum \limits_{\boldsymbol{I}\in G_{l}^{3}}}%
%BeginExpansion
{\displaystyle\sum\limits_{\boldsymbol{I}\in G_{l}^{3}}}
%EndExpansion
\left\{
%TCIMACRO{\dsum \limits_{i=1}^{4}}%
%BeginExpansion
{\displaystyle\sum\limits_{i=1}^{4}}
%EndExpansion
\left\vert c_{i,\boldsymbol{I}}\right\vert ^{2}\right\}  =%
%TCIMACRO{\dsum \limits_{\boldsymbol{I}\in G_{l}^{3}}}%
%BeginExpansion
{\displaystyle\sum\limits_{\boldsymbol{I}\in G_{l}^{3}}}
%EndExpansion
\pi_{\boldsymbol{I},\boldsymbol{J}}\left(  t\right)  .
\end{align*}

Then, we have established the following result:

\begin{theorem}
Assume that $J_{l}:\left[  0,1\right]  \rightarrow\mathbb{R}$, and
$J_{l}\left(  \left\vert y_{l}\right\vert _{p}\right)  :$ $\mathbb{Z}%
_{p}\rightarrow\mathbb{R}$ are integrable functions, for $l=1,2,3$. Then, the
matrix%
\[
\left[  \pi_{\boldsymbol{I},\boldsymbol{J}}\left(  t\right)  \right]
_{\boldsymbol{I},\boldsymbol{J}\in G_{l}^{3}}%
\]
is a \ stochastic, i.e., $\pi_{\boldsymbol{I},\boldsymbol{J}}\left(  t\right)
=1\in\left[  0,1\right]  $, and for any $\boldsymbol{J}\in G_{l}^{3}$ $%
%TCIMACRO{\tsum \nolimits_{\boldsymbol{I}\in G_{l}^{3}}}%
%BeginExpansion
{\textstyle\sum\nolimits_{\boldsymbol{I}\in G_{l}^{3}}}
%EndExpansion
\pi_{\boldsymbol{I},\boldsymbol{J}}\left(  t\right)  =1$, \ for $t\geq0$.
Consequently, it defines a random walk on the state space $\boldsymbol{J}\in
G_{l}^{3}$.
\end{theorem}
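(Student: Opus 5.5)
The plan is to fix $\boldsymbol{J}\in G_{l}^{3}$, $j\in\{1,2,3,4\}$ and $t\geq0$, and to read the coefficients $c_{i,\boldsymbol{I}}=c_{i,\boldsymbol{I}}(t)$ as the coordinates of $\Psi_{j,\boldsymbol{J}}(t,\cdot)$ in the orthonormal basis of Lemma \ref{Lemma1}. This is legitimate because the hypotheses on the kernels are exactly those of Lemma \ref{Lemma2}. That lemma gives $\Psi_{j,\boldsymbol{J}}(t,\cdot)\in\mathcal{D}_{l}(\mathbb{Z}_{p}^{3})\bigotimes\mathbb{C}^{4}$. This space has dimension $4p^{3l}$ and is spanned by the $4p^{3l}$ vectors $\psi_{i,\boldsymbol{I}}$, which are orthonormal by Lemma \ref{Lemma1}. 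Hence they form an orthonormal basis, and
\[
c_{i,\boldsymbol{I}}(t)=\left(\Psi_{j,\boldsymbol{J}}(t,\cdot),\psi_{i,\boldsymbol{I}}\right)=\left(e^{-\mathrm{i}t\boldsymbol{H}_{0}}\psi_{j,\boldsymbol{J}},\psi_{i,\boldsymbol{I}}\right).
\]
These coordinates are uniquely determined. The quantity $\pi_{\boldsymbol{I},\boldsymbol{J}}(t)=\sum_{i}|c_{i,\boldsymbol{I}}(t)|^{2}$ is therefore well defined (for a fixed choice of the internal index $j$ of the initial state) and is obviously nonnegative.

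Next I would establish normalization. Since $\boldsymbol{H}_{0}$ is bounded and self-adjoint (Proposition \ref{Proposition_0}), $e^{-\mathrm{i}t\boldsymbol{H}_{0}}$ is unitary, so (\ref{normalization}) holds. Parseval's identity in the orthonormal basis $\{\psi_{i,\boldsymbol{I}}\}$ then gives
\[
1=\|\Psi_{j,\boldsymbol{J}}(t,\cdot)\|^{2}=\sum_{i=1}^{4}\sum_{\boldsymbol{I}\in G_{l}^{3}}|c_{i,\boldsymbol{I}}(t)|^{2}=\sum_{\boldsymbol{I}\in G_{l}^{3}}\pi_{\boldsymbol{I},\boldsymbol{J}}(t).
\]
This is the same computation as the one displayed before the statement, where Remark \ref{Nota1} is used to integrate $|\cdot|^{2}$ ball by ball. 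So each column sum equals $1$. Because the terms are nonnegative, this also gives $0\leq\pi_{\boldsymbol{I},\boldsymbol{J}}(t)\leq1$, which is the intended meaning of the first condition ``$\pi_{\boldsymbol{I},\boldsymbol{J}}(t)\in[0,1]$'' in the statement.

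Finally I would note continuity in $t$ and $\pi_{\boldsymbol{I},\boldsymbol{J}}(0)=\delta_{\boldsymbol{I},\boldsymbol{J}}$. Both follow from the matrix form of $e^{-\mathrm{i}t\boldsymbol{H}_{0}}$ on the finite-dimensional space (Remark \ref{Nota2}). Together these show that $[\pi_{\boldsymbol{I},\boldsymbol{J}}(t)]$ is a column-stochastic matrix for every $t\geq0$, which is what ``defines a random walk on $G_{l}^{3}$'' means here.

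The only delicate point is the invariance in Lemma \ref{Lemma2}, which is assumed. If I had to check it, I would apply the quoted Claim in each coordinate $x_{l}$ separately. The Claim shows that each $\boldsymbol{J}_{x_{l}}$ maps the characteristic functions of the balls $\boldsymbol{I}+p^{l}\mathbb{Z}_{p}^{3}$ into $\mathcal{D}_{l}(\mathbb{Z}_{p}^{3})$. The key input is that the kernel is supported in $\mathbb{Z}_{p}$, so mass never leaves $\mathbb{Z}_{p}^{3}$. It follows that $\boldsymbol{H}_{0}$ preserves the finite-dimensional space $\mathcal{D}_{l}(\mathbb{Z}_{p}^{3})\bigotimes\mathbb{C}^{4}$. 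Since $\boldsymbol{H}_{0}$ is bounded, the power series of the exponential then shows that $e^{-\mathrm{i}t\boldsymbol{H}_{0}}$ preserves it as well.
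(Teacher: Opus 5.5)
Your proposal is correct and follows the paper's own argument. Lemma \ref{Lemma2} places $\Psi_{j,\boldsymbol{J}}(t,\cdot)$ in $\mathcal{D}_{l}(\mathbb{Z}_{p}^{3})\bigotimes\mathbb{C}^{4}$; you expand it in the orthonormal family of Lemma \ref{Lemma1}; unitarity of $e^{-\mathrm{i}t\boldsymbol{H}_{0}}$ plus Parseval, which the paper carries out concretely through the disjoint-support identity of Remark \ref{Nota1}, gives $\sum_{\boldsymbol{I}}\pi_{\boldsymbol{I},\boldsymbol{J}}(t)=1$. Your extra remarks are also correct and make explicit what the paper leaves implicit: the family is a basis by the dimension count, $\pi_{\boldsymbol{I},\boldsymbol{J}}$ depends on the fixed internal index $j$, $\pi_{\boldsymbol{I},\boldsymbol{J}}(0)=\delta_{\boldsymbol{I},\boldsymbol{J}}$, and $\pi_{\boldsymbol{I},\boldsymbol{J}}$ is continuous in $t$.
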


\begin{remark}
Notice that
\[
c_{i,\boldsymbol{I}}=\left(  \left[
\begin{array}
[c]{l}%
\Psi_{j,\boldsymbol{J}}^{\left(  1\right)  }\left(  t,x\right) \\
\vdots\\
\Psi_{j,\boldsymbol{J}}^{\left(  4\right)  }\left(  t,x\right)
\end{array}
\right]  ,e_{i}p^{\frac{3l}{2}}\Omega\left(  p^{l}\left\Vert x-\boldsymbol{I}%
\right\Vert _{p}\right)  \right)  =\left\langle \Psi_{j,\boldsymbol{J}%
}^{\left(  i\right)  }\left(  t,x\right)  ,p^{\frac{3l}{2}}\Omega\left(
p^{l}\left\Vert x-\boldsymbol{I}\right\Vert _{p}\right)  \right\rangle .
\]

Now, using the fact that Fourier transform preserves the inner product
$\left\langle \cdot,\cdot\right\rangle $:%
\[
c_{i,\boldsymbol{I}}=\left\langle \widehat{\Psi}_{j,\boldsymbol{J}}^{\left(
i\right)  }\left(  t,\xi\right)  ,p^{\frac{-3l}{2}}\chi_{p}\left(
\boldsymbol{I\cdot}\xi\right)  \Omega\left(  \left\Vert p^{l}\xi\right\Vert
_{p}\right)  \right\rangle .
\]

\end{remark}

\subsection{Computation of $\Psi_{j,\boldsymbol{J}}\left(  t,x\right)  $}

Consider the Cauchy problem%
\begin{equation}
\left\{
\begin{array}
[c]{ll}%
\mathrm{i}\frac{\partial}{\partial t}\Psi_{j,\boldsymbol{J}}\left(
t,x\right)  =\mathcal{F}_{\xi\rightarrow x}^{-1}\left\{  h\left(  \xi\right)
\mathcal{F}_{x\rightarrow\xi}\left(  \Psi_{j,\boldsymbol{J}}\left(
t,x\right)  \right)  \right\}  , & x\in\mathbb{Z}_{p}^{3}\text{, }%
t\in\mathbb{R}\\
& \\
\Psi\left(  0,x\right)  =\psi_{j,\boldsymbol{J}}\left(  x\right)
\in\mathfrak{H}_{0}\mathfrak{.} &
\end{array}
\right.  \label{Cauchy_Problem_1}%
\end{equation}
By passing to the Fourier transform in (\ref{Cauchy_Problem_1}),%
\[
\left\{
\begin{array}
[c]{ll}%
\mathrm{i}\frac{\partial}{\partial t}\widehat{\Psi_{j,\boldsymbol{J}}}\left(
t,\xi\right)  =h\left(  \xi\right)  \widehat{\Psi_{j,\boldsymbol{J}}}\left(
t,\xi\right)  , & \xi\in\mathbb{Z}_{p}^{3}\text{, }t\in\mathbb{R}\\
& \\
\widehat{\Psi}\left(  0,\xi\right)  =\widehat{\psi_{j,\boldsymbol{J}}}\left(
\xi\right)  \in\mathfrak{H}_{0}, &
\end{array}
\right.
\]
and using $h\left(  \xi\right)  =u(\xi)\beta\lambda\left(  \xi\right)
u^{-1}(\xi)$, cf. (\ref{diagonalization}), we have%
\[
\Psi_{j,\boldsymbol{J}}\left(  t,x\right)  =\mathcal{F}_{\xi\rightarrow
x}^{-1}\left\{  u(\xi)e^{-it\beta\lambda\left(  \xi\right)  }u^{-1}%
(\xi)\widehat{\psi_{j,\boldsymbol{J}}}\left(  \xi\right)  \right\}  =\left[
\begin{array}
[c]{l}%
\Psi_{j,\boldsymbol{J}}^{\left(  1\right)  }\left(  t,x\right) \\
\vdots\\
\Psi_{j,\boldsymbol{J}}^{\left(  4\right)  }\left(  t,x\right)
\end{array}
\right]  \in\mathcal{D}_{l}(\mathbb{Z}_{p}^{3})%
%TCIMACRO{\tbigotimes }%
%BeginExpansion
{\textstyle\bigotimes}
%EndExpansion
\mathbb{C}^{4}\text{, }%
\]
for every $t\geq0.$

\section{Towards quantum networks based on the Dirac equation}

\label{Section_Quantum_Computers}

The constructions of Sections \ref{Section_CTQW_I} and \ref{Section_CTQW_II}
were carried out for their own sake, as rigorous discretizations of the free
$p$-adic Dirac equation into genuine CTQWs. We now argue that these results
suggest the feasibility of quantum networks whose elementary dynamics are
Dirac-, rather than Schr\"{o}dinger-, type, and how they relate to the broader
state of the art reviewed in Section \ref{Section_Related_Work}.

\subsection{From a single walk to a quantum network}

The theorem of Section \ref{Section_CTQW_II} shows that the matrix $\left[
\pi_{\boldsymbol{I},\boldsymbol{J}}(t)\right]  _{\boldsymbol{I},\boldsymbol{J}%
\in G_{l}^{3}}$, obtained by summing the Born-rule weights of the four spinor
components at each node, is a genuine stochastic matrix for every $t\geq0$. In
other words, tracing out the internal (particle/antiparticle) degrees of
freedom of the $p$-adic Dirac CTQW produces, at every fixed time, an ordinary
random walk on the finite graph $G_{l}^{3}$. This is exactly the kind of
object used as the substrate for the quantum network or a quantum neural
network introduced in \cite{Zuniga-QNN}. In this paper, the authors present a
new class of quantum neural networks (QNNs) whose states are solutions of
$p$-adic Schr\"{o}dinger equations with a non-local potential that controls
the interaction between the neurons. These equations are obtained as Wick
rotations of the state equations of $p$-adic cellular neural networks (CNNs);
\cite{Zambrano-Zuniga-1,Zambrano-Zuniga-2,Zuniga-et-al}. $p$-Adic CNNs arise
as continuous limits of large discrete hierarchical neural networks (NNs).
They are bio-inspired by the Wilson--Cowan model, which describes the
macroscopic dynamics of large neuronal populations. Here, we propose a similar
construction using the $p$-adic Dirac equations introduced here. First, we
introduce a kernel matrix%
\[
W(x,y)=\left[
\begin{array}
[c]{lll}%
W_{11}(x,y) &  & W_{12}(x,y)\\
&  & \\
W_{21}(x,y) &  & W_{22}(x,y)
\end{array}
\right]  \text{, for }x,y\in\mathbb{Q}_{p}^{3}\text{,}%
\]
where $W_{ij}(x,y)\in\mathbb{C}$. This matrix describes the strength of the
connection between the spinors at positions $x$ and $y$. We fix an activation
function $\phi:\mathbb{R}\rightarrow\mathbb{R}$; for instance $\phi\left(
s\right)  =\tanh(s)$ or $\phi\left(  s\right)  =\frac{1}{2}\left\vert
s+1\right\vert -\frac{1}{2}\left\vert s-1\right\vert $. We extend \ $\phi$ to
the complex numbers taking $\phi\left(  a+\mathrm{i}b\right)  =$ $\phi\left(
a\right)  +$ \textrm{i}$\phi\left(  b\right)  $. Furthermore, we define%
\[
\phi\left(  \Psi\left(  t,x\right)  \right)  =\phi\left(  \left[
\begin{array}
[c]{l}%
\Psi^{\left(  1\right)  }\left(  t,y\right) \\
\vdots\\
\Psi^{\left(  4\right)  }\left(  t,y\right)
\end{array}
\right]  \right)  =\left[
\begin{array}
[c]{l}%
\phi\left(  \operatorname{Re}\Psi^{\left(  1\right)  }\left(  t,y\right)
\right)  +\mathrm{i}\phi\left(  \operatorname{Im}\Psi^{\left(  1\right)
}\left(  t,y\right)  \right) \\
\vdots\\
\phi\left(  \operatorname{Re}\Psi^{\left(  4\right)  }\left(  t,y\right)
\right)  +\mathrm{i}\phi\left(  \operatorname{Im}\Psi^{\left(  4\right)
}\left(  t,y\right)  \right)
\end{array}
\right]  ,
\]
and a bias%
\[
z\left(  t,x\right)  =\left[
\begin{array}
[c]{l}%
z^{\left(  1\right)  }\left(  t,x\right) \\
\vdots\\
z^{\left(  4\right)  }\left(  t,y\right)
\end{array}
\right]  .
\]
With this notation, we introduce the Dirac analog of the QNN\ introduce in
\cite{Zuniga-QNN}:%
\begin{equation}
\mathrm{i}\frac{\partial}{\partial t}\Psi\left(  t,x\right)  =\boldsymbol{H}%
_{0}\Psi\left(  t,x\right)  +%
%TCIMACRO{\dint \limits_{\mathbb{Q}_{p}^{3}}}%
%BeginExpansion
{\displaystyle\int\limits_{\mathbb{Q}_{p}^{3}}}
%EndExpansion
\text{ }W(x,y)\phi\left(  \Psi\left(  t,x\right)  \right)  d^{3}y+z\left(
t,x\right)  \text{, } \label{Dirac_QNN}%
\end{equation}
$t\geq0,x\in\mathbb{Q}_{p}^{3}$.

In a discretization of (\ref{Dirac_QNN}) in $G_{l}^{3}$, the vertices
$\boldsymbol{J}\in G_{l}^{3}$ play the role of neurons or network nodes,
arranged, as in \cite{Zuniga-QNN}, in the hierarchical, ultrametric tree
structure inherited from $\mathbb{Z}_{p}^{3}$, while the four-dimensional
fiber $\mathbb{C}^{4}$ attached to each node carries genuinely
relativistic-type (particle/antiparticle, spin) internal degrees of freedom
that are simply absent from the scalar $p$-adic Schr\"{o}dinger networks
studied previously. A Dirac-based quantum network built along these lines
would differ from the QNN family compared in \cite{Zuniga-QNN} in a
structurally new way: charge conjugation (Section \ref{Section_4}) furnishes a
built-in, physically motivated symmetry relating pairs of channels at each
node, and the topologically protected zero modes of the $p$-adic Jackiw--Rebbi
model, \cite{Zuniga-JR}, are natural candidates for noise-resilient network
nodes. Extending the free Dirac equation to a network by adding a non-local
weight kernel $W(x,y)$ and a nonlinear activation $\phi$ acting on each spinor
component, exactly as was done for the scalar case in \cite{Zuniga-QNN}, is a
direct route to a nonlinear, open-system \textquotedblleft Dirac
QNN\textquotedblright; we expect, by analogy with the scalar case, that such a
network would interpolate continuously between the strictly unitary regime
studied in this paper ($W=0$) and a Lindblad-type, decohering regime ($W\neq
0$), a question we leave for future work.

\subsection{Relation to the broader state of the art}

As discussed in Section \ref{Section_Related_Work}, the mainstream literature
connecting quantum walks to the Dirac equation is overwhelmingly built on
discrete-time, coined walks
\cite{Bialynicki-Birula,Meyer,Succi-Benzi,Arrighi-Nesme-Forets,Strauch1,Strauch2,Kurzynski,Chandrashekar}%
, with the Dirac equation appearing as a continuum \emph{limit} rather than as
the generator of a walk that is continuous in time from the outset; the few
genuinely continuous-time constructions with a spinor-like structure
\cite{Todtli,Jay-Debbasch-Wang} are confined to specific graph families
(bipartite or honeycomb) and do not extend to a hierarchical, tree-structured
network. The construction developed in this paper is, to the best of our
knowledge, the first CTQW---in the strict sense of a Hamiltonian, graph-based
walk, not a coined, discrete-time one---whose free dynamics coincides exactly
with a (here, $p$-adic) Dirac equation on a tree-structured, hierarchical
graph, and the first to place such a walk in a form suitable for use as the
substrate of a quantum network. In this sense, it closes the gap identified in
Section \ref{Section_Related_Work} on the non-Archimedean side, and suggests a
template that could be adapted to build a genuinely continuous-time,
graph-based analogue of the (real) Dirac equation in the Archimedean setting,
where, as noted in Section \ref{Section_Related_Work}, no such construction
currently exists either.

\section{Discussion and conclusions}

\label{Section_Discussion}

\subsection{Summary}

We introduced a new $p$-adic version of the Dirac equation, formulated in the
standard axiomatic (Dirac--von Neumann) framework of quantum mechanics, in
which the usual spatial derivatives are replaced by non-local operators rather
than by ordinary differentiation. Working in momentum space, we diagonalized
the free Hamiltonian governing this equation, determined its spectrum (Section
\ref{Section_3}), constructed its plane-wave solutions, and established a
charge-conjugation symmetry linking particle and antiparticle sectors (Section
\ref{Section_4}).

Building on this, we discretized the free equation in two ways, both giving
genuine continuous-time quantum walks rather than the discrete-time, coined
walks that dominate the literature. The first discretization is set on a
countable covering of the underlying $p$-adic space (Section
\ref{Section_CTQW_I}); the second, more explicit one is set on a finite,
tree-structured graph (Section \ref{Section_CTQW_II}). For the second
construction we proved that the transition probabilities, once the four
internal (particle/antiparticle) components of the wavefunction are added
together, form a genuine, properly normalized set of transition probabilities
at every moment in time. In other words, ignoring the internal structure of
the walk and looking only at where probability mass sits on the graph, the
walk behaves exactly like an ordinary random walk.

We then used the stochastic-matrix property established in Section
\ref{Section_CTQW_II} to argue, in Section \ref{Section_Quantum_Computers},
that the same construction can serve as the foundation for a quantum network,
extending earlier work on $p$-adic quantum neural networks \cite{Zuniga-QNN}
from ordinary (Schr\"{o}dinger-type) internal states to genuinely
relativistic-type ones carrying particle and antiparticle components.

As discussed in Section \ref{Section_Related_Work}, almost all of the existing
literature connecting quantum walks to the Dirac equation relies on
discrete-time, coined walks, in which the Dirac equation only emerges as a
limiting case
\cite{Bialynicki-Birula,Meyer,Succi-Benzi,Arrighi-Nesme-Forets,Strauch1,Strauch2,Kurzynski,Chandrashekar}%
; the handful of genuinely continuous-time constructions with an internal,
spinor-like structure are restricted to a few specific types of graphs
\cite{Todtli,Jay-Debbasch-Wang}. To the best of our knowledge, the
construction given in this paper is the first continuous-time quantum walk
whose free dynamics coincides exactly with a Dirac equation on a hierarchical,
tree-structured graph, and the first such construction, in either the ordinary
or the $p$-adic setting, presented explicitly as the foundation of a quantum
network rather than as a model of a single isolated particle.

\subsection{Open problems}

The results above raise more questions than they close. We list the ones we
regard as the most pressing.

\begin{enumerate}
\item[(1)] \textbf{Quantitative continuum limit.} We have not established a
quantitative estimate, analogous to the $O(\varepsilon^{2})$ convergence rate
proved by Arrighi, Nesme, and Forets for the higher-dimensional Archimedean
DTQW--Dirac correspondence \cite{Arrighi-Nesme-Forets}, for the convergence of
the discretized CTQW on $G_{l}^{3}$ (Section \ref{Section_CTQW_II}) to the
full $p$-adic Dirac equation as $l\rightarrow\infty$. Since $\mathbb{Z}%
_{p}^{3}$ is compact and ultrametric, one may expect a rate governed by
$p^{-l}$ rather than by a continuous $\varepsilon$; making this precise is open.

\item[(2)] \textbf{Interactions and gauge fields.} The Dirac equation studied
here is free. Coupling $\boldsymbol{H}_{0}$ to a genuinely $p$-adic
electromagnetic (or more general gauge) potential, in the spirit of the
gauge-covariant discrete-time walks of Arnault and Debbasch and of
\cite{DiMolfetta-Brachet-Debbasch} in the Archimedean setting, has not been
carried out; nor is it known whether the resulting equation would still
discretize into a genuine CTQW with a stochastic-matrix structure analogous to
the one established in Section \ref{Section_CTQW_II}.

\item[(3)] \textbf{Curved $p$-adic space-time.} Di Molfetta, Brachet, and
Debbasch showed that a family of discrete-time walks converges to massless
Dirac fermions on a curved background metric
\cite{DiMolfetta-Brachet-Debbasch}. Whether a meaningful notion of curvature
exists for $\mathbb{Q}_{p}^{3}$, and whether the CTQW construction of this
paper can be adapted to such a background, is entirely open; even the correct
non-Archimedean analogue of a Riemannian metric compatible with the
ultrametric structure of $\mathbb{Q}_{p}^{3}$ is not settled.

\item[(4)] \textbf{Nonlinear and open-system extensions.} As discussed in
Section \ref{Section_Quantum_Computers}, adding a non-local weight kernel
$W(x,y)$ and a componentwise nonlinear activation $\phi$ to the free Dirac
CTQW, in analogy with the scalar QNNs of \cite{Zuniga-QNN}, is expected to
produce a Lindblad-type, decohering network for $W\neq0$. Proving global
well-posedness of the resulting nonlinear equation (the four-spinor analogue
of the fixed-point argument used for the scalar case), and characterizing the
transition between the unitary and the decohering regime, remain open.

\item[(5)] \textbf{Entanglement structure.} We have not characterized the
entanglement generated by the free Dirac CTQW across the hierarchical levels
of the $p$-adic tree $G_{l}^{3}$, nor the entanglement between the four spinor
(particle/antiparticle) channels at a fixed node. Understanding this structure
is a prerequisite for assessing whether the construction offers any genuine
advantage, as a quantum network, beyond that already available for scalar CTQWs.

\item[(6)] \textbf{Topological protection on the walk.} The $p$-adic
Jackiw--Rebbi model of \cite{Zuniga-JR} produces topologically protected,
localized zero modes for the $p$-adic Dirac Hamiltonian. We have not
constructed the CTQW that discretizes that model, nor determined whether its
zero modes survive discretization on $G_{l}^{3}$ as noise-resilient network
nodes, as suggested informally in Section \ref{Section_Quantum_Computers}.
\end{enumerate}

We regard item (4) as the most immediate next step, since it extends, in a
natural and largely mechanical way, the machinery already developed in
\cite{Zuniga-QNN} for the scalar case to the four-spinor setting of the
present paper; items (2), (3), and (6), by contrast, appear to require
genuinely new ideas, specific to the interplay between $p$-adic analysis and
relativistic-type quantum mechanics, and we leave them for future work.

\section{\label{Appendix} Appendix: Basic facts on $p$-adic analysis}

In this section, we fix the notation and collect some basic results on
$p$-adic analysis that we will use throughout the article. For a detailed
exposition on $p$-adic analysis, the reader may consult \cite{V-V-Z},
\cite{Alberioetal}, \cite{Zuniga-Textbook}, \cite{Taibleson}.

\subsection{The field of $p$-adic numbers}

Along this article $p$ denotes a prime number. The field of $p-$adic numbers
$\mathbb{Q}_{p}$ is defined as the completion of the field of rational numbers
$\mathbb{Q}$ with respect to the $p-$adic norm $|\cdot|_{p}$, which is defined
as
\[
|x|_{p}=%
\begin{cases}
0 & \text{if }x=0\\
p^{-\gamma} & \text{if }x=p^{\gamma}\dfrac{a}{b},
\end{cases}
\]
where $a$ and $b$ are integers coprime with $p$. The integer $\gamma
=ord_{p}(x):=ord(x)$, with $ord(0):=+\infty$, is called the\textit{\ }$p-$adic
order of $x$. We extend the $p-$adic norm to $\mathbb{Q}_{p}^{N}$ by taking%
\[
||x||_{p}:=\max_{1\leq i\leq N}|x_{i}|_{p},\qquad\text{for }x=(x_{1}%
,\dots,x_{N})\in\mathbb{Q}_{p}^{N}.
\]
By defining $ord(x)=\min_{1\leq i\leq N}\{ord(x_{i})\}$, we have
$||x||_{p}=p^{-ord(x)}$.\ The metric space $\left(  \mathbb{Q}_{p}^{N}%
,||\cdot||_{p}\right)  $ is a complete ultrametric space. As a topological
space $\mathbb{Q}_{p}$\ is homeomorphic to a Cantor-like subset of the real
line, see, e.g., \cite{V-V-Z}, \cite{Alberioetal}.

Any $p-$adic number $x\neq0$ has a unique expansion of the form
\[
x=p^{ord(x)}\sum_{j=0}^{\infty}x_{j}p^{j},
\]
where $x_{j}\in\{0,1,2,\dots,p-1\}$ and $x_{0}\neq0$. \ In addition, any
$x\in\mathbb{Q}_{p}^{N}\smallsetminus\left\{  0\right\}  $ can be represented
uniquely as $x=p^{ord(x)}v$, where $\left\Vert v\right\Vert _{p}=1$.

\subsection{Topology of $\mathbb{Q}_{p}^{N}$}

For $r\in\mathbb{Z}$, denote by $B_{r}^{N}(a)=\{x\in\mathbb{Q}_{p}%
^{N};||x-a||_{p}\leq p^{r}\}$ the ball of radius $p^{r}$ with center at
$a=(a_{1},\dots,a_{N})\in\mathbb{Q}_{p}^{N}$, and take $B_{r}^{N}%
(0):=B_{r}^{N}$. Note that $B_{r}^{N}(a)=B_{r}(a_{1})\times\cdots\times
B_{r}(a_{N})$, where $B_{r}(a_{i}):=\{x\in\mathbb{Q}_{p};|x_{i}-a_{i}|_{p}\leq
p^{r}\}$ is the one-dimensional ball of radius $p^{r}$ with center at
$a_{i}\in\mathbb{Q}_{p}$. The ball $B_{0}^{N}$ equals the product of $N$
copies of $B_{0}=\mathbb{Z}_{p}$, the ring of $p-$adic integers. We also
denote by $S_{r}^{N}(a)=\{x\in\mathbb{Q}_{p}^{N};||x-a||_{p}=p^{r}\}$ the
sphere of radius $p^{r}$ with center at $a=(a_{1},\dots,a_{N})\in
\mathbb{Q}_{p}^{N}$, and take $S_{r}^{N}(0):=S_{r}^{N}$. In addition, two
balls in $\mathbb{Q}_{p}^{N}$ are either disjoint or one is contained in the other.

As a topological space $\left(  \mathbb{Q}_{p}^{N},||\cdot||_{p}\right)  $ is
totally disconnected, i.e., the only connected \ subsets of $\mathbb{Q}%
_{p}^{N}$ are the empty set and the points. A subset of $\mathbb{Q}_{p}^{N}$
is compact if and only if it is closed and bounded in $\mathbb{Q}_{p}^{N}$,
see, e.g., \cite[Section 1.3]{V-V-Z}, or \cite[Section 1.8]{Alberioetal}. The
balls and spheres are compact subsets. Thus $\left(  \mathbb{Q}_{p}%
^{N},||\cdot||_{p}\right)  $ is a locally compact topological space.

\subsection{The Haar measure}

Since $(\mathbb{Q}_{p}^{N},+)$ is a locally compact topological group, there
exists a Haar measure $d^{N}x$, which is invariant under translations, i.e.,
$d^{N}(x+a)=d^{N}x$, \cite{Halmos}. If we normalize this measure by the
condition $\int_{\mathbb{Z}_{p}^{N}}dx=1$, then $d^{N}x$ is unique.

\begin{notation}
We will use $\Omega\left(  p^{-r}||x-a||_{p}\right)  $ to denote the
characteristic function of the ball $B_{r}^{N}(a)=a+p^{-r}\mathbb{Z}_{p}^{N}$,
where
\[
\mathbb{Z}_{p}^{N}=\left\{  x\in\mathbb{Q}_{p}^{N};\left\Vert x\right\Vert
_{p}\leq1\right\}
\]
is the $N$-dimensional unit ball. For more general sets, we will use the
notation $1_{A}$ for the characteristic function of set $A$.
\end{notation}

\subsection{The Bruhat-Schwartz space}

A complex-valued function $\varphi$ defined on $\mathbb{Q}_{p}^{N}$ is called
locally constant if for any $x\in\mathbb{Q}_{p}^{N}$ there exist an integer
$l(x)\in\mathbb{Z}$ such that%
\begin{equation}
\varphi(x+x^{\prime})=\varphi(x)\text{ for any }x^{\prime}\in B_{l(x)}^{N}.
\label{local_constancy}%
\end{equation}
A function $\varphi:\mathbb{Q}_{p}^{N}\rightarrow\mathbb{C}$ is called a
Bruhat-Schwartz function (or a test function) if it is locally constant with
compact support. Any test function can be represented as a linear combination,
with complex coefficients, of characteristic functions of balls. The
$\mathbb{C}$-vector space of Bruhat-Schwartz functions is denoted by
$\mathcal{D}(\mathbb{Q}_{p}^{N})$. For $\varphi\in\mathcal{D}(\mathbb{Q}%
_{p}^{N})$, the largest number $l=l(\varphi)$ satisfying
(\ref{local_constancy}) is called the exponent of local constancy (or the
parameter of constancy) of $\varphi$.

\subsection{$L^{\rho}$ spaces}

Given $\rho\in\lbrack1,\infty)$, we denote by$L^{\rho}\left(
%TCIMACRO{\U{211a} }%
%BeginExpansion
\mathbb{Q}
%EndExpansion
_{p}^{N}\right)  :=L^{\rho}\left(
%TCIMACRO{\U{211a} }%
%BeginExpansion
\mathbb{Q}
%EndExpansion
_{p}^{N},d^{N}x\right)  ,$ the $\mathbb{C}-$vector space of all the complex
valued functions $g$ satisfying
\[
\left\Vert g\right\Vert _{\rho}=\left(  \text{ }%
%TCIMACRO{\dint \limits_{\mathbb{Q}_{p}^{N}}}%
%BeginExpansion
{\displaystyle\int\limits_{\mathbb{Q}_{p}^{N}}}
%EndExpansion
\left\vert g\left(  x\right)  \right\vert ^{\rho}d^{N}x\right)  ^{\frac
{1}{\rho}}<\infty,
\]
where $d^{N}x$ is the normalized Haar measure on $\left(  \mathbb{Q}_{p}%
^{N},+\right)  $.

If $U$ is an open subset of $\mathbb{Q}_{p}^{N}$, $\mathcal{D}(U)$ denotes the
$\mathbb{C}$-vector space of test functions with supports contained in $U$,
then $\mathcal{D}(U)$ is dense in
\[
L^{\rho}\left(  U\right)  =\left\{  \varphi:U\rightarrow\mathbb{C};\left\Vert
\varphi\right\Vert _{\rho}=\left\{
%TCIMACRO{\dint \limits_{U}}%
%BeginExpansion
{\displaystyle\int\limits_{U}}
%EndExpansion
\left\vert \varphi\left(  x\right)  \right\vert ^{\rho}d^{N}x\right\}
^{\frac{1}{\rho}}<\infty\right\}  ,
\]
for $1\leq\rho<\infty$, see, e.g., \cite[Section 4.3]{Alberioetal}. We denote
by $L_{\mathbb{R}}^{\rho}\left(  U\right)  $ the real counterpart of $L^{\rho
}\left(  U\right)  $. We use mainly the case where $U=\mathbb{Z}_{p}^{N}$.

\subsection{The Fourier transform}

We recall that a $p$-adic number $x\neq0$ has a unique expansion of the form
$x=p^{ord(x)}\sum_{j=0}^{\infty}x_{j}p^{j},$ where $x_{j}\in\{0,\dots,p-1\}$
and $x_{0}\neq0$. By using this expansion, we define the fractional part
of\textit{ }$x\in\mathbb{Q}_{p}$, denoted $\{x\}_{p}$, as the rational number
\[
\left\{  x\right\}  _{p}=\left\{
\begin{array}
[c]{lll}%
0 & \text{if} & x=0\text{ or }ord(x)\geq0\\
&  & \\
p^{ord(x)}\sum_{j=0}^{-ord(x)-1}x_{j}p^{j} & \text{if} & ord(x)<0.
\end{array}
\right.
\]
Set $\chi_{p}(y)=\exp(2\pi i\{y\}_{p})$ for $y\in\mathbb{Q}_{p}$. The map
$\chi_{p}(\cdot)$ is an additive character on $\mathbb{Q}_{p}$, i.e., a
continuous map from $\left(  \mathbb{Q}_{p},+\right)  $ into $S$ (the unit
circle considered as a multiplicative group) satisfying $\chi_{p}(x_{0}%
+x_{1})=\chi_{p}(x_{0})\chi_{p}(x_{1})$, $x_{0},x_{1}\in\mathbb{Q}_{p}$. \ The
additive characters of $\mathbb{Q}_{p}$ form an Abelian group which is
isomorphic to $\left(  \mathbb{Q}_{p},+\right)  $. The isomorphism is given by
$\xi\rightarrow\chi_{p}(\xi x)$, see, e.g., \cite[Section 2.3]{Alberioetal}.

Set $\xi\cdot x:=\sum_{j=1}^{N}\xi_{j}x_{j}$, for $\xi=(\xi_{1},\dots,\xi
_{N})$, $x=(x_{1},\dots,x_{N})\allowbreak\in\mathbb{Q}_{p}^{N}$, as before.
The Fourier transform of $\varphi\in\mathcal{D}(\mathbb{Q}_{p}^{N})$ is
defined as
\[
\mathcal{F}\varphi(\xi)=%
%TCIMACRO{\dint \limits_{\mathbb{Q}_{p}^{N}}}%
%BeginExpansion
{\displaystyle\int\limits_{\mathbb{Q}_{p}^{N}}}
%EndExpansion
\chi_{p}(\xi\cdot x)\varphi(x)d^{N}x\quad\text{for }\xi\in\mathbb{Q}_{p}^{N},
\]
where $d^{N}x$ is the normalized Haar measure on $\mathbb{Q}_{p}^{N}$. The
Fourier transform is a linear isomorphism from $\mathcal{D}(\mathbb{Q}_{p}%
^{N})$ onto itself satisfying
\begin{equation}
(\mathcal{F}(\mathcal{F}\varphi))(\xi)=\varphi(-\xi), \label{Eq_FFT}%
\end{equation}
see, e.g., \cite[Section 4.8]{Alberioetal}. We also use the notation
$\mathcal{F}_{x\rightarrow\kappa}\varphi$ and $\widehat{\varphi}$\ for the
Fourier transform of $\varphi$.

The Fourier transform extends to $L^{2}\left(  \mathbb{Q}_{p}^{N}\right)  $.
If $f\in L^{2}\left(  \mathbb{Q}_{p}^{N}\right)  $, its Fourier transform is
defined as
\[
(\mathcal{F}f)(\xi)=\lim_{k\rightarrow\infty}%
%TCIMACRO{\dint \limits_{||x||_{p}\leq p^{k}}}%
%BeginExpansion
{\displaystyle\int\limits_{||x||_{p}\leq p^{k}}}
%EndExpansion
\chi_{p}(\xi\cdot x)f(x)d^{N}x,\quad\text{for }\xi\in%
%TCIMACRO{\U{211a} }%
%BeginExpansion
\mathbb{Q}
%EndExpansion
_{p}^{N},
\]
where the limit is taken in $L^{2}\left(  \mathbb{Q}_{p}^{N}\right)  $. We
recall that the Fourier transform is unitary on $L^{2}\left(  \mathbb{Q}%
_{p}^{N}\right)  $, i.e. $||f||_{2}=||\mathcal{F}f||_{2}$ for $f\in
L^{2}\left(  \mathbb{Q}_{p}^{N}\right)  $ and that (\ref{Eq_FFT}) is also
valid in $L^{2}\left(  \mathbb{Q}_{p}^{N}\right)  $, see, e.g., \cite[Chapter
III, Section 2]{Taibleson}.

\subsection{}

\textbf{Conflict of Interests/Competing Interests}

I have no conflicts of interest to disclose.


\begin{thebibliography}{99}                                                                                               %


\bibitem {Beltrameti-et-al}Beltrametti E. G., Cassinelli G., Quantum mechanics
and $p$-adic numbers, Found. Phys. 2 (1972), 1--7.

\bibitem {V-V-QM1}Vladimirov V. S., Volovich I. V., p-adic quantum mechanics,
Soviet Phys. Dokl. 33 (1988), no. 9, 669--670.

\bibitem {V-V-QM2}Vladimirov V. S., Volovich, I. V., A vacuum state in p-adic
quantum mechanics. Phys. Lett. B 217 (1989), no. 4, 411--415.

\bibitem {V-V-QM3}Vladimirov V. S., Volovich I. V., $p$-adic quantum
mechanics, Comm. Math. Phys. 123 (1989), no. 4, 659--676.

\bibitem {Volovich}Volovich I. V., Number theory as the ultimate physical
theory. $p$-Adic Numbers Ultrametric Anal. Appl. 2 (2010), no. 1, 77--87.

\bibitem {V-V-Z}Vladimirov V. S., Volovich I. V., Zelenov E. I., $p$-Adic
analysis and mathematical physics. World Scientific, 1994.

\bibitem {Dirac}Dirac P. A. M.,The principles of quantum mechanics. First
edition 1930. Fourth Edition, Oxford University Press. 1958.

\bibitem {von-Neumann}von Neumann John, Mathematical foundations of quantum
Mechanics. First Edition 1932. Princeton University Press, Princeton, NJ, 2018.

\bibitem {Zuniga-double-slit}Z\'{u}\~{n}iga-Galindo, W. A., The $p$-Adic
Schr\"{o}dinger Equation and the Two-slit Experiment in Quantum Mechanics.
Annals of Physics 469 (2024), Paper No. 169747.

\bibitem {Zuniga-Dirac-Causality}Z\'{u}\~{n}iga-Galindo W. A., $p$-adic
quantum mechanics, the Dirac equation, and the violation of Einstein
causality, J. Phys. A 57 (2024), no. 30, Paper No. 305301, 29 pp.

\bibitem {Zuniga-2adic}Z\'{u}\~{n}iga-Galindo W. A., 2-Adic quantum mechanics,
continuous-time quantum walks, and the space discreteness, Fortschr. Phys.
(2025), e70019. https://doi.org/10.1002/prop.70019.

\bibitem {Zuniga-Mayes}Z\'{u}\~{n}iga-Galindo W. A., Mayes N. P., $p$-Adic
quantum mechanics, infinite potential wells, and continuous-time quantum
walks, Rev. Math. Phys. https://doi.org/10.1142/S0129055X25500278.

\bibitem {Zuniga-Chacon}Z\'{u}\~{n}iga Galindo W.A., Chac\'{o}n-Cort\'{e}s
L.F., Continuous-time quantum Markov chains and discretizations of p-adic
Schr\"{o}dinger equations: comparisons and simulations. To appear in Journal
of Statistical Mechanics: Theory and Experiment. arXiv: 2508.06712.

\bibitem {Zuniga-JR}Z\'{u}\~{n}iga-Galindo W. A., $p$-Adic Dirac equations and
the Jackiw--Rebbi model. arXiv:2603.17200.

\bibitem {Zuniga-AP}Z\'{u}\~{n}iga-Galindo, W.A., Quantum mechanics,
non-locality, and the space discreteness hypothesis, Annals of Physics 489
(2026), Paper No. 170459

\bibitem {Zuniga-Wigner}Z\'{u}\~{n}iga-Galindo W. A., Wavefunctions
localization, and the Wigner's Friend Paradox in a Framework of Discrete-Space
Hypothesis. arXiv:2607.00198.

\bibitem {Zuniga-QNN}Z\'{u}\~{n}iga-Galindo W. A., Zambrano-Luna B. A.,
Indoung Chayapuntika, Pattern formation in quantum hierarchical cellular
neural networks, Commun. Nonlinear Sci. Numer. Simul. 163 (2026), part 5,
Paper No. 110715.

\bibitem {Bialynicki-Birula}Bialynicki-Birula I., Weyl, Dirac, and Maxwell
equations on a lattice as unitary cellular automata, Phys. Rev. D 49 (1994),
no. 12, 6920--6927.

\bibitem {Meyer}Meyer D. A., From quantum cellular automata to quantum lattice
gases, J. Statist. Phys. 85 (1996), no. 5-6, 551--574.

\bibitem {Succi-Benzi}Succi S., Benzi R., Lattice Boltzmann equation for
quantum mechanics, Phys. D 69 (1993), no. 3-4, 327--332.

\bibitem {Arrighi-Nesme-Forets}Arrighi P., Nesme V., Forets M., The Dirac
equation as a quantum walk: higher dimensions, observational convergence, J.
Phys. A 47 (2014), no. 46, 465302, 33 pp.

\bibitem {Todtli}T\"{o}dtli B., Laner M., Semenov J., Paoli B., Blattner M.,
Kunegis J., Continuous-time quantum walks on directed bipartite graphs, arXiv:1606.00992.

\bibitem {Jay-Debbasch-Wang}Jay G., Debbasch F., Wang J. B., Dirac quantum
walks on triangular and honeycomb lattices, Phys. Rev. A 99 (2019), no. 3,
032113, 11 pp.

\bibitem {Thaller}Thaller Bernd, The Dirac equation. Texts Monogr. Phys.
Springer-Verlag, Berlin, 1992.

\bibitem {Bjorken}Bjorken James D., Drell Sidney D., Relativistic quantum
mechanics. McGraw-Hill Book Co., New York-Toronto-London, 1964

\bibitem {Greiner}Greiner Walter, Relativistic quantum mechanics.
Springer-Verlag, Berlin, 2000.

\bibitem {Feynman-Hibbs}Feynman R. P., Hibbs A. R., Quantum Mechanics and Path
Integrals. McGraw-Hill, New York, 1965.

\bibitem {Strauch1}Strauch F. W., Relativistic quantum walks, Phys. Rev. A 73
(2006), 054302; erratum Phys. Rev. A 73 (2006), 069908.

\bibitem {Strauch2}Strauch F. W., Relativistic effects and rigorous limits for
discrete- and continuous-time quantum walks, J. Math. Phys. 48 (2007), no. 8,
082102, 18 pp.

\bibitem {Kurzynski}Kurzy\'{n}ski P., Relativistic effects in quantum walks:
Klein's paradox and Zitterbewegung, arXiv:quant-ph/0606171.

\bibitem {Chandrashekar}Chandrashekar C. M., Banerjee S., Srikanth R.,
Relationship between quantum walks and relativistic quantum mechanics, Phys.
Rev. A 81 (2010), no. 6, 062340, 10 pp.

\bibitem {Gerritsma1}Gerritsma R., Kirchmair G., Z\"{a}hringer F., Solano E.,
Blatt R., Roos C. F., Quantum simulation of the Dirac equation, Nature 463
(2010), 68--71.

\bibitem {Gerritsma2}Gerritsma R., Lanyon B. P., Kirchmair G., Z\"{a}hringer
F., Hempel C., Casanova J., Garc\'{\i}a-Ripoll J. J., Solano E., Blatt R.,
Roos C. F., Quantum simulation of the Klein paradox with trapped ions, Phys.
Rev. Lett. 106 (2011), no. 6, 060503, 4 pp.

\bibitem {Farhi-Gutman}Farhi E., Gutmann S., Quantum computation and decision
trees, Phys. Rev. A (3)58 (1998), no.2, 915--928.

\bibitem {Mulkne-Blumen}M\"{u}lken O., Blumen A., Continuous-time quantum
walks: models for coherent transport on complex networks, Phys. Rep. 502
(2011), no. 2-3, 37--87.

\bibitem {Venegas-Andraca}Venegas-Andraca Salvador El\'{\i}as, Quantum walks:
a comprehensive review, Quantum Inf. Process. 11 (2012), no. 5, 1015--1106.

\bibitem {Childs-et-al}Childs A.M., Farhi E. \& Gutmann S., An Example of the
Difference Between Quantum and Classical Random Walks, Quantum Information
Processing 1, 35--43 (2002). https://doi.org/10.1023/A:1019609420309

\bibitem {DiMolfetta-Debbasch}Di Molfetta G., Debbasch F., Discrete-time
quantum walks: continuous limit and symmetries, J. Math. Phys. 53 (2012), no.
12, 123302, 20 pp.

\bibitem {DiMolfetta-Brachet-Debbasch}Di Molfetta G., Brachet M., Debbasch F.,
Quantum walks as massless Dirac fermions in curved space-time, Phys. Rev. A 88
(2013), no. 4, 042301, 8 pp.

\bibitem {Alberioetal}Albeverio S., Khrennikov A. Yu., Shelkovich V. M.,
Theory of $p$-adic distributions: linear and nonlinear models. London
Mathematical Society Lecture Note Series, 370. Cambridge University Press, 2010.

\bibitem {Reed-Simon-I}Michael Reed, Barry Simon, \textit{Methods of modern
mathematical physics. I. Functional analysis}. Second edition. Academic Press,
Inc. [Harcourt Brace Jovanovich, Publishers], New York, 1980.

\bibitem {Zambrano-Zuniga-1}Zambrano-Luna B. A., Z\'{u}\~{n}iga-Galindo W. A.,
$p$ -adic cellular neural networks, J. Nonlinear Math. Phys. 30 (2023), no. 1, 34--70.

\bibitem {Zambrano-Zuniga-2}Zambrano-Luna B. A., Z\'{u}\~{n}iga-Galindo W. A.,
$p$-adic cellular neural networks: applications to image processing. Phys. D
446 (2023), Paper No. 133668, 11 pp.

\bibitem {Zuniga-et-al}Z\'{u}\~{n}iga-Galindo, W.A., Zambrano-Luna, B.A.,
Dibba B., Hierarchical Neural Networks, $p$-Adic PDEs, and Applications to
Image Processing, J Nonlinear Math Phys 31, 63 (2024).

\bibitem {Zuniga-Textbook}Z\'{u}\~{n}iga-Galindo, W. A., p-Adic Analysis:
Stochastic Processes and Pseudo-Differential Equations, De Gruyter, 2025.

\bibitem {Taibleson}Taibleson M. H., Fourier analysis on local fields.
Princeton University Press, 1975.

\bibitem {Halmos}Halmos P., Measure Theory\textit{.} D. Van Nostrand Company
Inc., New York, 1950.
\end{thebibliography}
\end{document}